\newcommand{\rf}[1]{(\ref{#1})}
\DeclareMathOperator*{\argmin}{arg\,min}
\DeclareMathOperator*{\argmax}{arg\,max}
\DeclareMathOperator*{\rank}{rank}
\DeclareMathOperator*{\minimize}{minimize}
\DeclareMathOperator*{\sign}{sign}
\DeclareMathOperator*{\im}{im}
\DeclareMathOperator*{\conv}{conv}
\DeclareMathOperator*{\Prob}{Prob}
\DeclareMathOperator*{\supp}{supp}
\DeclareMathOperator*{\suchthat}{ s.t. }
\DeclareMathOperator*{\tr}{tr}
\DeclareMathOperator*{\col}{col}
\DeclareRobustCommand*\eg{\emph{e.g.,}\@\xspace}
\DeclareRobustCommand*\etc{\emph{etc.}\@\xspace}
\DeclareRobustCommand*\ie{\emph{i.e.,}\@\xspace}
\renewcommand{\Re}{\mathbb{R}}
\renewcommand{\paragraph}[1]{\smallskip\noindent\textbf{#1.} }
\renewcommand\footnotemark{}
\begin{document} 

\setstretch{1}
\parskip 1pt  

\begin{frontmatter}

\title{Analysis of A  Nonsmooth Optimization Approach to Robust Estimation\thanksref{footnoteinfo}}   

\thanks[footnoteinfo]{This paper was not presented at any IFAC 
meeting. Corresponding author L. Bako. Tel.: +33 472 186 452.}

\author[AMP-ECL-UnivLyon]{Laurent Bako}
\ead{laurent.bako@ec-lyon.fr}
\address[AMP-ECL-UnivLyon]{Laboratoire Amp\`{e}re -- Ecole Centrale de Lyon -- Universit\'{e} de Lyon, France}

\author[Linkoping,UCBerkeley]{Henrik Ohlsson}
\address[Linkoping]{Department of Electrical Engineering, Link\"{o}ping University, SE-581 83 Link\"{o}ping, Sweden}
\address[UCBerkeley]{Department of Electrical Engineering and Computer Sciences, University of California at Berkeley, CA, USA}
\ead{ohlsson@berkeley.edu}

\maketitle
\begin{abstract} 
In this paper,  we consider the problem of identifying a linear map from measurements which are subject to intermittent and arbitarily large   errors. This is a fundamental problem  in many estimation-related  applications  such as  fault detection, state estimation in lossy networks, hybrid system identification, robust estimation, etc. The problem is hard because it exhibits some intrinsic combinatorial features. Therefore, obtaining an effective solution necessitates relaxations that are both solvable at a reasonable cost and effective in the sense that they can return the true parameter vector. The current paper  discusses a nonsmooth convex optimization  approach and provides a new  analysis of its behavior. In particular, it is shown that under appropriate conditions on the data, an exact estimate can be recovered from data corrupted by a large (even infinite) number of gross errors. 
\end{abstract}
\begin{keyword}
robust estimation, outliers, system identification, nonsmooth optimization. 
\end{keyword}
\end{frontmatter}
\section{Introduction}\label{sec:Intro}

\subsection{Problem and motivations}
We consider a linear measurement model of the form 
\begin{equation}\label{eq:model}
	y_t=x_t^\top\theta^o+f_t+e_t
\end{equation}
where $y_t\in \Re$ is the measured signal, $x_t\in \Re^n$ the regression vector,  
$\left\{e_t\right\}$  a sequence of \textit{zero-mean and bounded} errors (\eg measurement noise, model mismatch, uncertainties, \etc) and 
 $\left\{f_t\right\}$  a sequence of \textit{intermittent and arbitrarily large} errors.
Assume that we observe the sequences $\left\{x_t\right\}_{t=1}^N$ and  $\left\{y_t\right\}_{t=1}^N$ and would like to compute the parameter vector $\theta^o$ from these observations. We are interested in doing so without knowing any of the sequences   $\left\{f_t\right\}$ and $\left\{e_t\right\}$. We do however  make the following assumptions:
\begin{itemize}
	\item $\left\{e_t\right\}$ is a bounded sequence. 
	\item $\left\{f_t\right\}$ is a sequence containing zeros and intermittent gross errors with (possibly) arbitrarily large magnitudes.
\end{itemize}
This is an important estimation problem arising in many situations such as fault detection \cite{Ozay11-CDC-ECC,Chen11-IFAC}, hybrid system identification \cite{Garulli12-SYSID}, subspace clustering \cite{Vidal10-SPM,Bako14-SPL}, error correction in communication networks \cite{Candes06-IT}. The case when $\left\{f_t\right\}$ is zero  and $\left\{e_t\right\}$ is a Gaussian process has been well-studied in linear system identification theory (see, \eg the text books \cite{LjungBook,Soderstrom-Book89}). 
A less studied, but very relevant scenario in the system identification community, is when the additional perturbation $\{f_t\}$ in \eqref{eq:model} is nonzero and contains intermittent and arbitrarily large errors  \cite{Candes06-IT,Sharon09-ACC,Mitra13-SP,Xu14-Automatica}.   
It is worth noticing the difference with the problem studied in the field of
compressive sensing  \cite{Candes06-IT,Donoho:06,Candes08-SPM}. In compressive sensing, the sought parameter
vector is assumed sparse and the measurement noise $\left\{e_t\right\}$, often Gaussian or
bounded. Here, no assumptions are made concerning sparsity of $\theta^o$. We will, in this contribution, study essentially  the case when  the data is noise-free (\ie $e_t=0$ for all $t$) and $\{f_t\}$ is a  sequence with intermittent gross errors. We will derive conditions for perfect recovery and point to effective algorithms for computing $\theta^o$. 
In the second part of the paper, the model assumption is relaxed to allow both $e_t$ and $f_t$ to be  simultaneously nonzero. Note that this might be a more realistic scenario since most applications have measurement noise. \\
For illustrative purposes, let us discuss briefly  some applications where a model of the form \eqref{eq:model} is of interest.

\paragraph{Switched linear system identification}
A discrete-time Multi-Input Single-Output (MISO) Switched Linear System (SLS) can be written in
the form
\begin{equation}\label{eq:SARX}
	   y_t=x_t^\top{\theta_{\sigma_t}^o} +e_t,
\end{equation}
where 
 $x_t\in \Re^n$ is the regressor at time $t\in \mathbb{Z}_+$ defined by
\begin{equation}\label{eq:Regressor}
	 x_t=\big[\begin{matrix}y_{t-1}&\cdots & y_{t-n_a} & u_t^\top & u_{t-1}^\top & \cdots & u_{t-n_b}^\top  \end{matrix}\big]^\top,
\end{equation}
where $u_t\in \Re^{n_u}$ and  $y_t\in \Re$ denote respectively the input and the
output of the system. The integers $n_a$ and $n_b$ in
\eqref{eq:Regressor} are  the maximum  output and input lags (also called the orders of the system).   
 $\sigma_t\in \left\{1,\ldots,s\right\}$ is the
 discrete mode (or discrete state) indexing the active
 subsystem at time $t$; it  is in general assumed \textit{unobserved}. $\theta_{\sigma_t}^o\in  \Re^{n}$, $n=n_a+n_bn_u$, is the parameter vector (PV) associated with the mode $\sigma_t$. For  $\theta^o\in \left\{\theta_1^o,\ldots,\theta_s^o\right\}$, the  Switched Auto-Regressive eXogenous (SARX) model \eqref{eq:SARX} can be written in the form  \eqref{eq:model}, with unknown $f_t$ of the following structure $f_t=x_t^\top(\theta_{\sigma_t}^o-\theta_{}^o )$. For a background on hybrid system identification, we refer to the references 
 \cite{Paoletti07,Garulli12-SYSID,Vidal03,Bako11-Automatica,Maruta11-CDC,Ozay12-TAC,Ohlsson13-Automatica}. 

\paragraph{Identification from faulty data}
A model of the form \eqref{eq:model} also arises when one has to identify a linear dynamic system which is subject to intermittent sensor faults. This is  the case in general  when the data are transmitted over a communication network \cite{Candes06-IT,Ozay11-CDC-ECC}. Model \eqref{eq:model} is suitable for such situations and the sequence $\left\{f_t\right\}$ then models occasional data packets losses or potential outliers. 
More precisely,  a dynamic MISO system with process faults can be directly written in the form \eqref{eq:model}. In the case of sensor faults,  the faulty model might be defined by 
$$ 
\left\{\begin{aligned}
	& \bar{y}_t=\bar{x}_t^\top{\theta^o} +e_t \\
	& y_t=\bar{y}_t+w_t
\end{aligned} \right.
$$
where $y_t\in \Re$ is the \textit{observed output} which is affected by the fault $w_t$ (assumed to be nonzero only occasionally) ; $\bar{x}_t$ is defined as in \eqref{eq:Regressor} from the \textit{known input} $u_t$ and  the \textit{unobserved  output} $\bar{y}_t$.  We can rewrite the faulty model exactly in the form \eqref{eq:model} with 
$f_t=w_t-\big[\begin{matrix}w_{t-1} & \cdots & w_{t-n_a} \end{matrix}\big]\theta^o.$ Sparsity of $\left\{w_t\right\}$ induces sparsity of $\left\{f_t\right\}$ but in a lesser extent. 

\paragraph{State estimation in the presence of intermittent errors}
Considering a MISO dynamic system with state dynamics described by
$z_{t+1}=Az_t+Bu_t$ and observation equation $\tilde{y}_t=C^\top
z_t+f_t$,  $(A,B,C)$ being known matrices of appropriate dimensions, and $\left\{f_t\right\}$  a sparse sequence of possibly very large  errors,  the finite horizon state estimation problem reduces to the estimation of the initial state $z_0=\theta$. We get a model of the form  \eqref{eq:model} by setting 
$y_t=\tilde{y}_t-C^\top \Delta_t \bar{u}_t$ and  $x_t=(A^t)^\top C$,
with $\Delta_t=\big[\begin{matrix}A^{t-1}B & \cdots & AB &
  B\end{matrix}\big]$, $\bar{u}_t =\big[\begin{matrix}u_0^\top &
  \cdots & u_{t-1}^\top\end{matrix}\big]^\top$. 
Examples of relevant works are those reported in \cite{Bako13-SLC,Fawzi14-TAC}.		
In this latter application, it can however be noted that the dataset $\left\{x_t\right\}$ may not be
generic enough.  \footnote{In this paper, the term genericity for a dataset characterizes a notion of linear independence.  For example, a set of $N>n$  data points in general linear position in $\Re^n$ is more generic than a set of data points  contained in one subspace. We will introduce different quantitative measures of data genericity in the sequel (see Definition \ref{def:genericity} and Theorem \ref{prop:sufficient-condition}).} 

\paragraph{Connection to robust statistics}
Indeed, the problem of identifying the parameters from model \eqref{eq:model} under the announced assumptions can be viewed as a robust regression problem where the  nonzero elements in the sequence  $\left\{f_t\right\}$ are termed outliers. As such, it has received a lot of attention in the robust statistics literature (see, e.g., \cite{Huber-Book-09,Rousseeuw05-Book,Maronna06-Book} for an overview). Examples of methods to tackle the robust estimation problem include the least absolute deviation \cite{Huber87-L1}, the least median of squares \cite{Rousseeuw84}, the least trimmed squares \cite{Rousseeuw05-Book}, the M-estimator \cite{Huber-Book-09}, etc. Most of these estimators come with an analysis in terms of the  breakdown point  \cite{Hampel71,SeberBook03}, a measure of the (asymptotic) minimum proportion of  points  which cause an estimator to be unbounded if they were to be arbitrarily corrupted by gross errors.  The current report focuses on the analysis of a nonsmooth convex optimization approach which includes the least absolute deviation method as a particular case corresponding to the situation when the output in \eqref{eq:model} is a scalar.  The analysis approach taken in the current paper is different in the following sense. 
\begin{itemize}
	\item In robust statistics the quality of an estimator is measured by its breakdown point. The higher the breakdown point, the better. The available analysis is therefore directed to  determining a sort of absolute robustness: how many outliers (expressed in proportion of the total number of samples) cause the estimator to become unbounded.  
	\item Here, the question of robust performance of the estimator is posed differently. We are interested in estimating the maximum number of outliers that a nonsmooth-optimization-based estimator can accommodate while still returning the exact value one would obtain in the absence of any outlier. This is more related to the traditional view developed in compressive sensing. 
\end{itemize}

\paragraph{Contributions of this paper}
 One promising method for estimating model \eqref{eq:model} is by
 nonsmooth convex optimization as suggested in
 \cite{Candes06-IT,Sharon09-ACC,Bako11-Automatica,Mitra13-SP,Xu14-Automatica}. More precisely,
 inspired by the recent theory of compressed sensing
 \cite{Candes06-IT,Donoho:06,Candes08-SPM}, the idea is to minimize a
 nonsmooth (and non differentiable) sum-of-norms objective function
 involving the fitting errors. Under noise-free assumption, such a cost function has the nice
 property that it is able to provide the true parameter vector in the
 presence of arbitrarily large errors $\left\{f_t\right\}$ provided
 that the number of nonzero errors is small in some sense. Of course, when the data are corrupted simultaneously by the noise $\left\{e_t\right\}$ and the gross errors $\left\{f_t\right\}$, the recovery cannot be exact any more. It is however expected (as Proposition \ref{prop:error-bound} and simulations tend to suggest) that  the estimate will still be close to the true one.\\ 
The current paper intends to present a new analysis of the nonsmooth optimization approach and provide some elements for further  understanding its behavior. The line of analysis goes from a full characterization of the nonsmooth optimization based  estimator (both for SISO and MIMO systems) to the study of robustness to outliers including in the presence of dense noise. With respect to relevant works  \cite{Candes06-IT,Sharon09-ACC,Bako11-Automatica,Mitra13-SP,Xu14-Automatica}, we derive new bounds on the number of outliers (in the least favorable situations) that the estimator is capable to accommodate. It is emphasized that a quite broad spectrum of such bounds can be derived based on the basic characterization of the nonsmooth identifier. Note however that evaluating numerically the tightest of these bounds is a high computational process while less tight bounds have a more affordable complexity. Some of the bounds developed in this contribution meet both relative tightness requirement and computability in polynomial time (see the bound based on $\xi(X)$ in Theorem \ref{prop:sufficient-condition}). Finally, the paper show how the results derived in the first part for $\ell_1$-norm estimator when applied to the estimation of SISO systems are generalizable   to multivariable systems. 

\paragraph{Outline of this paper} The outline of the paper is as follows. We start in Section \ref{subsec:Sparse-Opt} by viewing  the nonsmooth optimization  as the convex relaxation of a (ideal) combinatorial  $\ell_0$-"norm" formulation.  We then derive in Section \ref{subsec:L1-characterization} necessary and sufficient conditions for optimality. Based on those conditions we establish in Section \ref{subsec:Sufficient-Conditions}  new sufficient conditions for exact recovery of the true parameter vector in \eqref{eq:model}. The noisy case is treated in Section  \ref{subsec:Noise}. Section \ref{sec:multivariable} presents a generalization of the earlier discussions to multi-output systems. Finally, numerical experiments are described in Section \ref{sec:simulation} and concluding remarks are given in Section \ref{sec:conclusion}.

\subsection{Notations}
Let $\mathbb{I}=\left\{1,\ldots,N\right\}$ be the index set of the measurements. 
For any $\theta\in \Re^{n}$, define a partition of the set of indices $\mathbb{I}$ by 
$\mathbb{I}^-(\theta)=\left\{t\in \mathbb{I}:\theta^\top x_t-y_t<0\right\}$,  
$\mathbb{I}^+(\theta)=\left\{t \in \mathbb{I}:\theta^\top x_t-y_t>0\right\}$,  
$\mathbb{I}^0(\theta)=\left\{t\in \mathbb{I}:\theta^\top x_t-y_t=0\right\}$. 
\textit{Cardinality of a finite set.} Throughout the paper, whenever $\mathcal{S}$ is a finite set, the notation $\left|\mathcal{S}\right|$ will refer to the cardinality of $\mathcal{S}$. However, for a real number $x$, $\left|x\right|$ will denote the absolute value of $x$.\\
\textit{Submatrices and subvectors.}
Let $X = \big[\begin{matrix}x_1 & x_2 & \cdots & x_N\end{matrix}\big]\in \Re^{n\times N}$ be the matrix formed with the available regressors $\left\{x_t\right\}_{t=1}^N$.  If $I\subset \mathbb{I}$, the notation $X_I$ denotes a matrix in $\Re^{n\times \left|I\right|}$ formed with the columns of $X$ indexed by $I$.
Likewise, with $\bm{y}=\big[\begin{matrix}y_1 & y_2 & \cdots & y_N\end{matrix}\big]^\top\in \Re^N$, $\bm{y}_I$ is the vector in $\Re^{\left|I\right|}$ formed with the entries of $\bm{y}$ indexed by $I$. 
We will use the convention that $X_I=0\in \Re^n$ (resp. $\bm{y}_I=0\in \Re$) when the  index set $I$ is empty. \\
\textit{Vector norms.}
 $\left\|\cdot\right\|_p$, $p=1,2, \ldots, \infty$,  denote the usual $p$-norms for vectors defined for any vector  $z=\big[\begin{matrix}z_1 & \cdots & z_N\end{matrix}\big]^\top\in \Re^N$, by  $\left\|z\right\|_p=\left(\left|z_1\right|^p+\cdots+\left|z_N\right|^p\right)^{1/p}$. Note that $\left\|z\right\|_\infty=\max_{i=1,\ldots,N}\left|z_i\right|$. 
The $\ell_0$ "norm" of $z$ is defined to be the number of nonzero entries in $z$, i.e., $\left\|z\right\|_0=\left|\left\{i: z_i\neq 0\right\}\right|$. \\
\textit{Matrix norms.} The following matrix norms will be used: $\left\|\cdot\right\|_{p}$, $p=1,2, \ldots, \infty$, $\left\|\cdot\right\|_{2,\col}$, $\left\|\cdot\right\|_{2,\infty}$. They are defined as follows:  for a matrix $A=\big[\begin{matrix}a_1 & \cdots & a_N\end{matrix}\big]\in \Re^{n\times N}$ with $a_i\in \Re^n$, 
{\setstretch{.8}
$$\begin{aligned}
	&\left\|A\right\|_p=\sup_{x\in \Re^N,\left\|x\right\|_p=1}\left\|Ax\right\|_p,  
	\quad \left\|A\right\|_{2,\col}=\sum_{i=1}^N \left\|a_i\right\|_2,  \\
	&  \left\|A\right\|_{2,\infty}=\max_{i=1,\ldots,N}\left\|a_i\right\|_2. 
	\end{aligned}
	$$
	}
\vspace*{-\baselineskip}
\section{Nonsmooth optimization for the estimation problem}
\subsection{Sparse optimization}\label{subsec:Sparse-Opt}
The main idea for identifying the parameter vector $\theta^o$ from \eqref{eq:model} is by solving a sparse optimization problem, that is, a problem which involves the minimization of the number of nonzeros entries  in the error vector.  To be more specific, assume for the time being that the error sequence $\left\{e_t\right\}$ is identically equal to zero. Consider a candidate parameter vector $\theta\in \Re^n$ and  let 
$$\phi(\theta) =\bm{y}-X^\top \theta, $$
where $\bm{y}=\big[\begin{matrix}y_1  & \cdots & y_N\end{matrix}\big]^\top$, $X = \big[\begin{matrix}x_1  & \cdots & x_N\end{matrix}\big]$,
be the fitting error vector induced by $\theta$ on the experimental data. 
Then the vector  $\theta^o$ can naturally be searched for by minimizing an $\ell_0$ objective function,  
\begin{equation}\label{eq:L0-Pbm}
	\minimize_{\theta\in \Re^n}\left\|\phi(\theta)\right\|_0
\end{equation}
where $\left\|\cdot\right\|_0$ denotes the $\ell_0$ pseudo-norm which counts the number of nonzero entries. Because problem \eqref{eq:L0-Pbm} aims at making the error $\phi(\theta)$ sparse by minimizing the number of nonzero elements (or maximizing the number of zeros), it is sometimes called a sparse optimization problem \cite{Bako11-Automatica}. 
As can be intuitively guessed, the recoverability of the true
parameter vector $\theta^o$ from \eqref{eq:L0-Pbm} depends naturally
on some properties  of the available data. This is outlined by the following lemma. 
\begin{lem}[A  sufficient condition for $\ell_0$ recovery]
\label{lem:L0-solution-set}
Assume that $\left\{e_t\right\}$ is equal to zero and 
let $\bm{f}=\big[\begin{matrix}f_1&\cdots &
  f_N\end{matrix}\big]^\top.$ 
Assume that  for any $I\subset \mathbb{I}$  with $\left|I\right|>n$, $\bm{f}_I\notin \im(X_I^\top)$ whenever $\bm{f}_I\neq 0$, with $\im(\cdot)$ referring here to  range space. 
Then provided   $\left|\mathbb{I}^0(\theta^o)\right|>n$, it holds that 
\begin{equation}
\label{eq:L0minSet}
	\theta^o\in \argmin_{\theta} \left\|\phi(\theta)\right\|_0. 
\end{equation}
\end{lem}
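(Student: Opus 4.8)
The plan is to show that no parameter vector $\theta$ can beat $\theta^o$ in the objective $\|\phi(\cdot)\|_0$, i.e.\ that $\|\phi(\theta)\|_0 \geq \|\phi(\theta^o)\|_0$ for every $\theta \in \Re^n$. Since $e_t = 0$, we have $\phi(\theta^o) = \bm{y} - X^\top\theta^o = \bm{f}$, so $\|\phi(\theta^o)\|_0 = \|\bm{f}\|_0 = N - |\mathbb{I}^0(\theta^o)|$. Thus the claim is equivalent to $\|\phi(\theta)\|_0 \geq N - |\mathbb{I}^0(\theta^o)|$ for all $\theta$, that is, the zero set of $\phi(\theta)$ has cardinality at most $|\mathbb{I}^0(\theta^o)|$ — unless $\theta$ already achieves the same objective value.

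First I would fix an arbitrary $\theta$ and let $Z = \{t \in \mathbb{I} : \phi_t(\theta) = 0\} = \{t : x_t^\top\theta = y_t\}$ be the set of data points fit exactly by $\theta$. The goal is to bound $|Z|$. On the index set $Z$ we have $X_Z^\top \theta = \bm{y}_Z$, and also (using $e_t = 0$) $\bm{y}_Z = X_Z^\top\theta^o + \bm{f}_Z$. Subtracting gives $X_Z^\top(\theta - \theta^o) = -\bm{f}_Z$, hence $\bm{f}_Z \in \im(X_Z^\top)$. Now invoke the hypothesis: for any $I$ with $|I| > n$, $\bm{f}_I \in \im(X_I^\top)$ forces $\bm{f}_I = 0$. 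Applying this with $I = Z$: if $|Z| > n$, then $\bm{f}_Z = 0$, which in turn means $X_Z^\top(\theta - \theta^o) = 0$.

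The key case analysis then splits on whether $|Z| > n$. If $|Z| \leq n$, then $\|\phi(\theta)\|_0 = N - |Z| \geq N - n > N - |\mathbb{I}^0(\theta^o)|$ (using the hypothesis $|\mathbb{I}^0(\theta^o)| > n$), so $\theta$ is strictly worse than $\theta^o$. If $|Z| > n$, then from the previous step $\bm{f}_Z = 0$, so $Z \subseteq \{t : f_t = 0\}$; moreover for every $t \in Z$ we have $y_t = x_t^\top\theta^o + f_t = x_t^\top\theta^o$, which says $Z \subseteq \mathbb{I}^0(\theta^o)$. Therefore $\|\phi(\theta)\|_0 = N - |Z| \geq N - |\mathbb{I}^0(\theta^o)| = \|\phi(\theta^o)\|_0$. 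In both cases the objective at $\theta$ is at least that at $\theta^o$, which is exactly \eqref{eq:L0minSet}.

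I do not anticipate a serious obstacle here — the argument is essentially a clean unpacking of the genericity hypothesis. The one point demanding care is making sure the two subcases are exhaustive and that the inequalities point the right way: in the small-$Z$ case one needs $N - n \geq N - |\mathbb{I}^0(\theta^o)|$, which is where the standing assumption $|\mathbb{I}^0(\theta^o)| > n$ (equivalently $\|\bm{f}\|_0 < N - n$) is consumed, and in the large-$Z$ case one must verify the set inclusion $Z \subseteq \mathbb{I}^0(\theta^o)$ rather than merely $\bm{f}_Z = 0$. A final remark worth adding is that this shows $\theta^o$ lies in the $\argmin$ but says nothing about uniqueness; a separate argument (presumably the motivation for the strict genericity used later) would be needed to conclude $\theta^o$ is the \emph{unique} minimizer.
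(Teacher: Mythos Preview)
Your proof is correct and follows essentially the same approach as the paper: both hinge on the observation that if $\theta$ fits the data exactly on an index set $Z$, then $\bm{f}_Z = X_Z^\top(\theta-\theta^o)\in\im(X_Z^\top)$ (note the sign in your display should be $+\bm{f}_Z$, not $-\bm{f}_Z$, though this is immaterial for the image condition). The paper packages the same computation as a proof by contradiction on a putative strictly better minimizer $\theta^m$, whereas you argue directly via a case split on $|Z|$; the logical content is identical.
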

\begin{pf}
We proceed by contradiction. Assume that \eqref{eq:L0minSet} does not hold, \ie $\min_{\theta}\left\|\phi(\theta)\right\|_0< \left\|\phi(\theta^o)\right\|_0$. Then, by letting $\theta^m$ be any vector in $\argmin_{\theta} \left\|\phi(\theta)\right\|_0$, the above inequality translates into  $\left|\mathbb{I}^0(\theta^m)\right|>\left|\mathbb{I}^0(\theta^o)\right|>n$. It follows that $\bm{f}_{\mathbb{I}^0(\theta^m)}\neq 0$ because $\left|\mathbb{I}^0(\theta^o)\right|=\left|\left\{t\in \mathbb{I}:f_t= 0\right\}\right|$ is the exact (largest) number of zero elements in the sequence $\left\{f_t\right\}_{t=1}^N$. Note also that we have necessarily $\left|\mathbb{I}^0(\theta^m)\right|>n$. On the other hand, with $\bm{y}_{\mathbb{I}^0(\theta^m)}=X_{\mathbb{I}^0(\theta^m)}^\top \theta^m=X_{\mathbb{I}^0(\theta^m)}^\top \theta^o+\bm{f}_{\mathbb{I}^0(\theta^m)}$, it can be seen that $\bm{f}_{\mathbb{I}^0(\theta^m)}=X_{\mathbb{I}^0(\theta^m)}^\top \left(\theta^m-\theta^o\right)\in \im(X_{\mathbb{I}^0(\theta^m)}^\top)$. This, together with $\bm{f}_{\mathbb{I}^0(\theta^m)}\neq 0$, constitutes a contradiction to the assumption of the Lemma. Hence, \eqref{eq:L0minSet} holds as claimed.  \qed 
\end{pf}%
\vspace*{-\baselineskip}
Lemma \ref{lem:L0-solution-set} specifies a condition involving both $X$ and $\bm{f}$ and under which $\theta^o$ lies in the solution set but it does not ensure that $\theta^o$ will be recovered uniquely from data. 
Before proceeding further, we  recall from \cite{Bako11-Automatica} a sufficient condition under which $\theta^o$ is the unique solution to \eqref{eq:L0-Pbm}. 
\begin{defn}[\cite{Bako11-Automatica} An integer measure of genericity]
\label{def:genericity}
Let $X\in \Re^{n\times N}$ be a data matrix satisfying $\rank(X)=n$.
The $n$-genericity index of $X$ denoted   $\nu_n(X)$,   is defined as 
the minimum integer $m$ such that any $n\times m$ submatrix of $X$  has rank $n$,  
{
\begin{equation}\label{eq:Nu_n_X}
\nu_n(X)=\min\Big\{m:\forall \: \mathcal{S}\subset I \mbox{ with }\left|\mathcal{S}\right|=m, \:  \rank(X_\mathcal{S})=n\Big\}.
\end{equation}
}
\end{defn}
\begin{thm}[\cite{Bako11-Automatica} Sufficient condition for $\ell_0$ recovery]\label{thm:Uniqueness_L0}
Assume that the sequence $\left\{e_t\right\}$ in \eqref{eq:model} is identically equal to zero. If the sequence $\left\{f_t\right\}$ in \eqref{eq:model} contains enough zero values in the sense that    
\begin{equation}\label{eq:uniqueness_L0}
\left|\mathbb{I}^0(\theta^o)\right| = \left|\big\{t\in \mathbb{I}: f_t=0\big\} \right|\geq \dfrac{N+\nu_n(X)}{2}, 
\end{equation}
then $\theta^o$  is the unique solution to the $\ell_0$-norm minimization problem \eqref{eq:L0-Pbm}. 
\end{thm}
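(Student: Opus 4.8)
The plan is to establish uniqueness by showing that \emph{every} minimizer of $\|\phi(\cdot)\|_0$ must coincide with $\theta^o$. Since $e_t\equiv 0$, we have $\phi(\theta^o)=\bm y-X^\top\theta^o=\bm f$, so $\|\phi(\theta^o)\|_0=\|\bm f\|_0=N-\left|\mathbb I^0(\theta^o)\right|$; also, for any candidate $\theta$, writing $\delta=\theta-\theta^o$ gives $\phi(\theta)=\bm f-X^\top\delta$, hence its zero pattern is $\mathbb I^0(\theta)=\{t\in\mathbb I: f_t=x_t^\top\delta\}$. The first key observation is that on the intersection $S:=\mathbb I^0(\theta)\cap\mathbb I^0(\theta^o)$ both $f_t=0$ and $f_t=x_t^\top\delta$ hold, so $X_S^\top\delta=0$.

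Next I would lower-bound $|S|$. The objective $\|\phi(\cdot)\|_0$ takes values in the finite set $\{0,1,\ldots,N\}$, so a minimizer exists; fix one, say $\theta$. Minimality gives $\|\phi(\theta)\|_0\le\|\phi(\theta^o)\|_0$, equivalently $\left|\mathbb I^0(\theta)\right|\ge\left|\mathbb I^0(\theta^o)\right|$. Combining the elementary overlap bound $|S|\ge\left|\mathbb I^0(\theta)\right|+\left|\mathbb I^0(\theta^o)\right|-N$ with the hypothesis \eqref{eq:uniqueness_L0} yields
$$|S|\ \ge\ 2\left|\mathbb I^0(\theta^o)\right|-N\ \ge\ 2\cdot\frac{N+\nu_n(X)}{2}-N\ =\ \nu_n(X).$$

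Finally I would invoke Definition \ref{def:genericity}: any $n\times m$ submatrix of $X$ with $m\ge\nu_n(X)$ has rank $n$ (for $m>\nu_n(X)$ this is because such a submatrix contains an $n\times\nu_n(X)$ submatrix of rank $n$, and adding columns cannot lower the rank). Applying this to $S$ gives $\rank(X_S)=n$, so $X_S^\top$ has trivial kernel and $X_S^\top\delta=0$ forces $\delta=0$, i.e.\ $\theta=\theta^o$. This simultaneously shows $\theta^o$ attains the minimum and that it is the only minimizer, proving the claim.

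The argument is short; the only points needing a little care are the monotonicity remark about the full-rank property in Definition \ref{def:genericity} persisting for submatrices with more than $\nu_n(X)$ columns, and the observation that the $\ell_0$ objective actually attains its minimum so that ``the unique solution'' is well posed. I do not expect a genuine obstacle: the threshold $(N+\nu_n(X))/2$ in \eqref{eq:uniqueness_L0} is exactly calibrated so that two index sets of that size inside $\mathbb I$ must overlap in at least $\nu_n(X)$ indices, which is precisely what the genericity index needs to kill $\delta$.
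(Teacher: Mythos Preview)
Your argument is correct and complete. The paper itself does not give a proof of this theorem; it simply refers the reader to \cite{Bako11-Automatica}. Your proof is the standard pigeonhole/overlap argument for $\ell_0$ uniqueness and is presumably what that reference contains: any minimizer $\theta$ satisfies $|\mathbb I^0(\theta)|\ge|\mathbb I^0(\theta^o)|$, the two zero sets must therefore overlap in at least $\nu_n(X)$ indices, and the genericity index then forces $\theta=\theta^o$. The two minor points you flag --- existence of a minimizer and persistence of the full-rank property for $m>\nu_n(X)$ --- are handled correctly.
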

In other words, if the number of nonzero gross errors $\left\{f_t\right\}$ affecting the data generated by \eqref{eq:model} does not exceed the threshold $(N-\nu_n(X))/2$, then $\theta^o$ can be exactly recovered by solving \eqref{eq:L0-Pbm}. Unfortunately, this problem is a hard combinatorial optimization problem. A  tractable solution can be obtained by relaxing the $\ell_0$-norm into its best convex approximant, the $\ell_1$-norm.  
Doing this substitution  in \eqref{eq:L0-Pbm} gives 
\begin{equation}\label{eq:L1-Cost}
	\minimize_{\theta\in \Re^n}\left\|\phi(\theta)\right\|_1
\end{equation}
with $\left\|\phi(\theta)\right\|_1=\sum_{t=1}^N\left|y_t-\theta^\top x_t\right|$. 
The latter problem is termed a nonsmooth convex optimization problem \cite[Chap. 3]{Nesterov04-Book} because the objective function is convex but non-differentiable.  Compared to \eqref{eq:L0-Pbm}, problem \eqref{eq:L1-Cost} has the advantage of being convex and can hence be efficiently solved by many existing numerical solvers, \eg \cite{Grant-Boyd_CVX}. 
Note further that it can  be written as a linear programming problem. 
The $\ell_1$ relaxation process has been intensively used in the compressed sensing literature \cite{Donoho03-NAS} for approaching the sparsest solution of an underdetermined set of linear equations. In the robust statistics literature as surveyed above, \eqref{eq:L1-Cost} corresponds to  a well-known estimator referred to as the least absolute deviation estimator \cite{Rousseeuw84}. 
As will be shown next, the underlying reason why problem \eqref{eq:L1-Cost} can obtain the true parameter vector despite the presence of gross perturbations $\left\{f_t\right\}$ is related to its nonsmoothness.

\subsection{Solution to the $\ell_1$ problem}
\label{subsec:L1-characterization}

There is a wealth of analysis in the literature of compressed sensing investigating under which conditions some problems\footnote{Those problems look for the sparsest solution to an underdetermined set of linear equations. As such they are similar but different to the problem studied in the current paper. Note that the process of converting problems \eqref{eq:L0-Pbm} and \eqref{eq:L1-Cost} into the format treated in compressed sensing yields a system of linear equations which is much less underdetermined.} 
of similar structure as  \eqref{eq:L0-Pbm} and \eqref{eq:L1-Cost} can yield the same solution. This analysis is mainly based on the concepts of mutual coherence \cite{Donoho03-NAS} and the Restricted Isometry Property \cite{Candes05-IT}. 
Here, we shall propose a parallel but different analysis for the robust estimation problem. We start by characterizing the solution to the $\ell_1$-norm problem \eqref{eq:L1-Cost}. 
\begin{thm}[Solution to the $\ell_1$ problem]\label{thm:equivalence}
A vector $\theta^\star\in \Re^n$ solves the $\ell_1$-norm problem \eqref{eq:L1-Cost} if and only if any of the following equivalent statements hold: 
\begin{enumerate}
\item[S1.] There exist some  numbers $\lambda_t\in \left[-1,\; 1\right]$, $t\in \mathbb{I}^0(\theta^\star)$, such that\footnote{Eq. \eqref{eq:iff-Cond} should be understood here with the implicit convention  that any of the three terms is equal to zero whenever the corresponding index set is empty.  } 
\begin{equation}\label{eq:iff-Cond}
\sum_{t\in \mathbb{I}^+(\theta^\star)} x_t-\sum_{t\in \mathbb{I}^-(\theta^\star)}x_t= \sum_{t\in \mathbb{I}^0(\theta^\star)}\lambda_t x_t.
\end{equation}
\item[S2.] For any $\eta \in \Re^n$, 
\begin{equation}\label{eq:iff-Cond2}
	\Big|\sum_{t\in \mathbb{I}^+(\theta^\star)}\eta^\top x_t-\sum_{t\in \mathbb{I}^-(\theta^\star)}\eta^\top x_t\Big|\leq \sum_{t\in \mathbb{I}^0(\theta^\star)}\big|\eta^\top x_t\big|.
\end{equation}
 \item[S3.] The optimal value of the optimization problem 
\begin{equation}\label{eq:min_alpha<1}
	\min_{\alpha}\left\|\alpha\right\|_\infty \: \mbox{ subject to } z=X_{\mathbb{I}^0(\theta^\star)}\alpha,
\end{equation}
where $z = \sum_{t\in \mathbb{I}^+(\theta^\star)}x_t-\sum_{t\in \mathbb{I}^-(\theta^\star)} x_t$, $\alpha\in \Re^{\left|\mathbb{I}^0(\theta^\star)\right|}$, 
is less than or equal to $1$.
\end{enumerate} 
\noindent Moreover,  the solution $\theta^\star$ is unique if and only if any of the following statements is true:
\begin{enumerate}
	\item[S1'.] \eqref{eq:iff-Cond} holds and $\rank(X_S)=n$ where  
	$$S=\left\{t\in \mathbb{I}^0(\theta^\star): \left|\lambda_t\right|<1\right\}.$$ 
	\item[S2'.] \eqref{eq:iff-Cond2} holds with strict inequality symbol for all $\eta\in \Re^n$, $\eta\neq 0$.
\end{enumerate}
\end{thm}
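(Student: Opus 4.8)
The plan is to treat $J(\theta):=\|\phi(\theta)\|_1=\sum_{t=1}^N|y_t-\theta^\top x_t|$ as a finite-valued, convex, piecewise-affine function and to read off its (unique) minimizers from first-order information. First I would compute, for a fixed $\theta^\star$ and an arbitrary direction $\eta\in\Re^n$, the one-sided directional derivative
$$J'(\theta^\star;\eta)=\lim_{\tau\downarrow0}\frac{J(\theta^\star+\tau\eta)-J(\theta^\star)}{\tau}=\sum_{t\in\mathbb{I}^+(\theta^\star)}\eta^\top x_t-\sum_{t\in\mathbb{I}^-(\theta^\star)}\eta^\top x_t+\sum_{t\in\mathbb{I}^0(\theta^\star)}|\eta^\top x_t|,$$
obtained by splitting the sum according to the sign of the residual $y_t-\theta^{\star\top}x_t$: strictly nonzero residuals contribute a linear term $\pm\eta^\top x_t$, while zero residuals contribute $|\eta^\top x_t|$. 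Since $J$ is convex and finite everywhere, $\theta^\star$ is a minimizer iff $J'(\theta^\star;\eta)\ge0$ for all $\eta$; applying this to $\eta$ and to $-\eta$ yields exactly S2 (equivalently, this is $0\in\partial J(\theta^\star)$ after computing $\partial J(\theta^\star)$ by the sum and chain rules for subdifferentials). To pass from S2 to S1 I would introduce the zonotope $C=\{X_{\mathbb{I}^0(\theta^\star)}\alpha:\|\alpha\|_\infty\le1\}=\{\sum_{t\in\mathbb{I}^0(\theta^\star)}\lambda_tx_t:\lambda_t\in[-1,1]\}$, which is compact and convex, and observe that S1 is precisely the assertion $z\in C$, where $z=\sum_{t\in\mathbb{I}^+(\theta^\star)}x_t-\sum_{t\in\mathbb{I}^-(\theta^\star)}x_t$. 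The support function of $C$ is $\eta\mapsto\sup_{\|\alpha\|_\infty\le1}\eta^\top X_{\mathbb{I}^0(\theta^\star)}\alpha=\sum_{t\in\mathbb{I}^0(\theta^\star)}|\eta^\top x_t|$, so by the supporting-hyperplane description of closed convex sets, $z\in C$ iff $\eta^\top z\le\sum_{t\in\mathbb{I}^0(\theta^\star)}|\eta^\top x_t|$ for every $\eta$, i.e. (using $\pm\eta$) iff S2. The implication S1$\Rightarrow$S2 is immediate from the triangle inequality, so the loop closes; and S3 is a literal reformulation of $z\in C$, since the feasible set $\{\alpha:z=X_{\mathbb{I}^0(\theta^\star)}\alpha\}$ is an affine subspace, its $\ell_\infty$-sublevel sets are compact, hence the infimum in \eqref{eq:min_alpha<1} is attained whenever finite and equals a value $\le1$ exactly when some feasible $\alpha$ has $\|\alpha\|_\infty\le1$.

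For the uniqueness part I would first note that, because $\tau\mapsto J(\theta^\star+\tau\eta)$ is convex piecewise-affine, $\theta^\star$ is the \emph{unique} minimizer iff $J'(\theta^\star;\eta)>0$ for every $\eta\ne0$: if $J'(\theta^\star;\eta)=0$ for some $\eta\ne0$ then $J$ equals its minimum on an initial segment $\{\theta^\star+\tau\eta:0\le\tau\le\tau_1\}$ with $\tau_1>0$, so the minimizer is not unique; conversely strict positivity excludes any segment of minimizers through $\theta^\star$. Applying the displayed formula to $\pm\eta$ translates this into S2'. To obtain S1'$\Leftrightarrow$S2' I would rewrite, for any $\lambda$ realizing S1, $J'(\theta^\star;\eta)=\eta^\top z+\sum_{t\in\mathbb{I}^0(\theta^\star)}|\eta^\top x_t|=\sum_{t\in\mathbb{I}^0(\theta^\star)}\big(|\eta^\top x_t|+\lambda_t\eta^\top x_t\big)\ge\sum_{t\in S}(1-|\lambda_t|)\,|\eta^\top x_t|\ge0$, with $S=\{t\in\mathbb{I}^0(\theta^\star):|\lambda_t|<1\}$. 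If $\rank(X_S)=n$, then for each $\eta\ne0$ some $t\in S$ has $\eta^\top x_t\ne0$, forcing $J'(\theta^\star;\eta)>0$ and hence uniqueness (and S2'). For the converse, arguing contrapositively: if S2' fails while $\theta^\star$ is still optimal (so S2 holds), pick $\eta\ne0$ with $\eta^\top z=\sum_{t\in\mathbb{I}^0(\theta^\star)}|\eta^\top x_t|$ after possibly replacing $\eta$ by $-\eta$; then for \emph{every} $\lambda$ realizing S1 the identity $\sum_{t\in\mathbb{I}^0(\theta^\star)}(|\eta^\top x_t|-\lambda_t\eta^\top x_t)=0$ together with the termwise bound $|\eta^\top x_t|-\lambda_t\eta^\top x_t\ge(1-|\lambda_t|)|\eta^\top x_t|\ge0$ forces $\eta^\top x_t=0$ on $S$, so $\rank(X_S)<n$ for all admissible $\lambda$ and S1' fails.

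I expect two points to require the most care. First, the passage S2$\Rightarrow$S1 is the only genuinely non-elementary step: it rests on identifying the support function of the zonotope $C$ and on $C$ being closed, i.e. on a separating-hyperplane (LP-duality) argument rather than direct manipulation — the other equivalences are then bookkeeping. Second, in the uniqueness part one must be precise about the quantifier on the multipliers in S1': the $\lambda_t$ satisfying \eqref{eq:iff-Cond} need not be unique, and the analysis above shows that S1' is to be understood as ``there exists an admissible $\lambda$ with $\rank(X_S)=n$'', so verifying that this existential statement — and not the corresponding universal one — is exactly equivalent to S2' is where the termwise nonnegativity accounting is essential.
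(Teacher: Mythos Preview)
Your treatment of the optimality conditions S1--S3 is correct and essentially parallel to the paper's: you use the directional derivative $J'(\theta^\star;\eta)$ where the paper uses the subdifferential $\partial\|\phi(\theta^\star)\|_1$ and the subgradient inequality, and your zonotope/support-function passage S2$\Rightarrow$S1 makes explicit the separation step that the paper sidesteps by deriving S1 and S2 independently from $0\in\partial J(\theta^\star)$. Your equivalence ``unique $\Leftrightarrow J'(\theta^\star;\eta)>0$ for all $\eta\ne0\Leftrightarrow$ S2'\,'' is also correct, as is S1'$\Rightarrow$S2' via the termwise lower bound $J'(\theta^\star;\eta)\ge\sum_{t\in S}(1-|\lambda_t|)\,|\eta^\top x_t|$.

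There is, however, a genuine gap in the uniqueness part. What you label ``for the converse, arguing contrapositively'' assumes $\neg$S2' and concludes that every admissible $\lambda$ has $\rank(X_S)<n$, i.e.\ $\neg$S1'. But $\neg\text{S2'}\Rightarrow\neg\text{S1'}$ is the contrapositive of S1'$\Rightarrow$S2', the direction you have just proved --- it is not the converse. The implication S2'$\Rightarrow$S1' is still missing: from S2' you must \emph{produce} a $\lambda$ with $\rank(X_{S(\lambda)})=n$. A short fix within your own framework: S2' says $\eta^\top z<h_C(\eta)$ for all $\eta\ne0$; taking $\eta\in\ker X_{\mathbb{I}^0(\theta^\star)}^\top$ first forces $\rank(X_{\mathbb{I}^0(\theta^\star)})=n$, and then the strict inequality means $z\in\operatorname{int}C$, so $(1+\varepsilon)z\in C$ for small $\varepsilon>0$, i.e.\ $(1+\varepsilon)z=X_{\mathbb{I}^0(\theta^\star)}\mu$ with $\|\mu\|_\infty\le1$, giving $\lambda=\mu/(1+\varepsilon)$ with $\|\lambda\|_\infty<1$ and hence $S=\mathbb{I}^0(\theta^\star)$ of full rank. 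The paper handles this direction differently --- it fixes a $\lambda$ with $\rank(X_S)<n$, picks $\eta\in\ker X_S^\top$, and argues that $\theta^\star+\nu\eta$ is a second minimizer --- an argument that is itself delicate at the reassignment of indices $t\in\mathbb{I}^0(\theta^\star)\setminus S$ (it tacitly needs $\operatorname{sign}(\eta^\top x_t)=\lambda_t$ there). Your instinct that S1' must be read existentially over $\lambda$ is well-founded, but the loop only closes once the missing implication above is supplied.
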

\vspace*{-\baselineskip}
\begin{pf}
\paragraph{Proof of S1}
Since $\left\|\phi(\theta)\right\|_1$ is a proper convex function, it has a non empty subdifferential \cite{Rockafellar97}. 
The necessary and sufficient condition for $\theta^\star$ to be a solution of \eqref{eq:L1-Cost} is then 
\begin{equation}
	0\in \partial\left\|\phi(\theta^\star)\right\|_1,
\end{equation}
where the notation $\partial$ refers to subdifferential with respect to $\theta$. Indeed, using additivity of subdifferentials, it is straightforward to write 
\begin{equation}\label{eq:Subdifferential}
	\partial \left\|\phi(\theta^\star)\right\|_1 = \!\!\!\!\sum_{t\in \mathbb{I}^+(\theta^\star)}x_t-\!\!\!\!\sum_{t\in \mathbb{I}^-(\theta^\star)}x_t+\!\!\!\!\sum_{t\in \mathbb{I}^0(\theta^\star)}\conv\left\{-x_t,x_t\right\}
\end{equation}
where $\conv$ refers to the convex hull. Here, the addition symbol is meant in the Minkowski sum sense. 
It follows that $0\in \partial\left\|\phi(\theta^\star)\right\|_1$ is equivalent to the existence of a set of numbers $\lambda_t$ in $[-1,\; 1]$, $t\in \mathbb{I}^0(\theta^\star)$, such that \eqref{eq:iff-Cond} holds. 

\paragraph{Proof of S2}
Define two functions $q,h :\Re^n\rightarrow \Re_{\geq 0}$ by $q(\theta)=\sum_{t\notin \mathbb{I}^0(\theta^\star)}\left|y_t-\theta^\top x_t\right|$ and  $h(\theta)=\sum_{t\in \mathbb{I}^0(\theta^\star)}\left|y_t-\theta^\top x_t\right|$.
Then $\left\|\phi(\theta)\right\|_1=q(\theta)+h(\theta)$ and $q$ is differentiable at $\theta^\star$. It follows that  $\partial\left\|\phi(\theta^\star)\right\|_1=\nabla q(\theta^\star)+\partial h(\theta^\star)$, 
where $\nabla q(\theta^\star)$ is the gradient of $q$ at $\theta^\star$. We can hence write 
$$ \begin{aligned}
	\theta^\star \mbox{ minimizes } \left\|\phi(\theta)\right\|_1 
	&\Leftrightarrow \: 0\in \partial\left\|\phi(\theta^\star)\right\|_1  \\
	&\Leftrightarrow \: -\nabla q(\theta^\star) \in \partial h(\theta^\star).
\end{aligned}
$$ 
Note from \eqref{eq:Subdifferential} that $\partial h(\theta^\star)=\sum_{t\in \mathbb{I}^0(\theta^\star)}\conv\left\{-x_t,x_t\right\}$ so that $-\nabla q(\theta^\star) \in \partial h(\theta^\star)$ if and only if 
$\pm \nabla q(\theta^\star) \in \partial h(\theta^\star)$ and this in turn is equivalent to
$g^\top (\theta-\theta^\star) \leq h(\theta)-h(\theta^\star)$ $\forall \: \theta$, 
for $g\in \left\{-\nabla q(\theta^\star),+\nabla q(\theta^\star)\right\}$. 
It follows that $\theta^\star$ minimizes $\left\|\phi(\theta)\right\|_1$ if and only if 
\begin{equation}\label{eq:subgradient-inequality}
	\left|\nabla q(\theta^\star)^\top (\theta-\theta^\star) \right|\leq h(\theta)-h(\theta^\star)=\sum_{t\in \mathbb{I}^0(\theta^\star)}\left|(\theta-\theta^\star)^\top x_t\right| 
\end{equation}
for all $\theta$. 
The last equality is obtained by using the fact that $y_t-x_t^\top\theta^\star=0$ for all $t$ in $\mathbb{I}^0(\theta^\star)$. 
\noindent Finally the result follows by setting $\eta=\theta-\theta^\star$ and noting that $\nabla q(\theta^\star)=\sum_{t\in \mathbb{I}^+(\theta^\star)}x_t-\sum_{t\in \mathbb{I}^-(\theta^\star)}x_t$.

\paragraph{S1 $\Leftrightarrow$ S3}
The proof of the last equivalence is immediate. 
 
\paragraph{Uniqueness}
For convenience, we first prove S2'. Along the lines of the proof of S2 (see Eq. \eqref{eq:subgradient-inequality} and preceding arguments), we can see  that strict inequality in \eqref{eq:iff-Cond2} is equivalent to the following strict inequality
$-\nabla q(\theta^\star)^\top (\theta-\theta^\star)< h(\theta)-h(\theta^\star) \quad \forall \: \theta \neq \theta^\star$. 
On the other hand, $\nabla q(\theta^\star)^\top (\theta-\theta^\star)\leq  q(\theta)-q(\theta^\star) \quad \forall \: \theta$.  
Summing the two yields 
$$\left\|\phi(\theta^\star)\right\|_1= q(\theta^\star)+h(\theta^\star)<q(\theta)+h(\theta)=\left\|\phi(\theta)\right\|_1\: \forall \: \theta \neq \theta^\star.$$
Hence S2' is proved. \\
For the proof of S1', we proceed in two steps.\\
\textit{Sufficiency.} Assume $\rank(X_S)=n$. Then for any nonzero vector $\eta\in \Re^n$ there is at least one  $t_0\in S$ such that $\eta^\top x_{t_0}\neq 0$. Recall that $\left|\lambda_{t_0}\right|<1$ by definition of $S$. It follows that by multiplying  \eqref{eq:iff-Cond} on the left by $\eta^\top$ with $\eta\in \Re^n$ an arbitrary nonzero vector, and taking the absolute value yields \eqref{eq:iff-Cond2} with strict inequality symbol. We can therefore apply the proof of S2' to conclude that $\theta^\star$ is unique. \\ 
\textit{Necessity.}
Assume $\rank(X_S)<n$. Then pick any nonzero vector $\eta$ in $\ker(X_S^\top)$. Set $\eta_1=\nu \eta$ with $\nu\neq 0$. Indeed $\nu$ can be chosen sufficiently small such that $x_t^\top (\theta^\star+\eta_1)-y_t$ has the same sign as $x_t^\top \theta^\star-y_t$ for $t\in\mathbb{I}^-(\theta^\star)\cup \mathbb{I}^+(\theta^\star)$. For such values of $\nu$ we have $\mathbb{I}^+(\theta^\star)\subset\mathbb{I}^+(\theta^\star+\eta_1)$ and $\mathbb{I}^-(\theta^\star)\subset\mathbb{I}^-(\theta^\star+\eta_1)$. Moreover, since $\eta_1\in \ker( X_S^\top)$, $x_t^\top (\theta^\star+\eta_1)-y_t=\eta_1^\top x_t=0$ $\forall\: t \in S$, so that $S\subset \mathbb{I}^0(\theta^\star+\eta_1)$. Finally, it remains to re-assign the indices $t$ contained in $\mathbb{I}^0(\theta^\star)\setminus S$ for which $\lambda_t=1$. We get the following partition $\mathbb{I}^+(\theta^\star+\eta_1)=\mathbb{I}^+(\theta^\star)\cup \left\{t\in \mathbb{I}^0(\theta^\star):\eta_1^\top x_t>0\right\}$, $\mathbb{I}^-(\theta^\star+\eta_1)=\mathbb{I}^-(\theta^\star)\cup \left\{t\in \mathbb{I}^0(\theta^\star):\eta_1^\top x_t<0\right\}$, $\mathbb{I}^0(\theta^\star+\eta_1)=S\cup \left\{t\in \mathbb{I}^0(\theta^\star):\eta_1^\top x_t=0\right\}$.    
It follows that $\theta^\star+\eta_1\neq \theta^\star$ also satisfies \eqref{eq:iff-Cond} with the  sequence $\left\{\lambda_t\right\}_{t\in S}$ and is therefore a minimizer.  In conclusion, if $\rank(X_S)<n$, the minimizer cannot be unique. \qed
\end{pf}
\vspace*{-\baselineskip}
A number of important comments follow from Theorem \ref{thm:equivalence}. 
\begin{itemize}
	\item One first consequence of the theorem is that $\theta^o$
          can be computed exactly from a finite set of erroneous data (by solving problem \eqref{eq:L1-Cost})
          provided it satisfies the conditions S1' or S2' of the theorem. Note
          that there is no explicit boundedness condition imposed on the error
          sequence $\left\{f_t\right\}$. Hence the nonzero errors in
          this sequence can have arbitrarily large magnitudes as long as the optimization problem makes sense, \ie  provided $\|\phi(\theta^\star)\|_1$ remains finite.   
\item Second, the true parameter vector $\theta^o$ can be exactly recovered in the presence of, say, infinitely many nonzero errors $f_t$ (see also Proposition \ref{prop:infinite-errors}). For example, if the regressors $\left\{x_t\right\}$ satisfy 
$$\sum_{t\in \mathbb{I}^+(\theta^o)} x_t-\sum_{t\in \mathbb{I}^-(\theta^o)}x_t=0,  $$
and $\rank(X_{\mathbb{I}^0(\theta^o)})=n$, then by condition S2' $\theta^o$ is the unique solution to problem \eqref{eq:L1-Cost} regardless of the number of errors affecting the data. 
\item Third, if problem \eqref{eq:L1-Cost} admits a solution $\theta^\star$ that satisfies $y_t-x_t^\top\theta^\star\neq 0$  for all $t=1,\ldots,N$, then $\theta^\star$ is  non-unique. In effect,  $\mathbb{I}^0(\theta^\star)=\emptyset$ in this case and so, $\rank(X_{\mathbb{I}^0(\theta^\star)})=0<n$ which, by Theorem \ref{thm:equivalence},  implies non-uniqueness.  Indeed this is typically the case whenever the noise $\left\{e_t\right\}$ is nonzero. 
\end{itemize}
Another immediate consequence of Theorem \ref{thm:equivalence} can be stated as follows. 
\begin{cor}[On the special case of affine model]
If the model \eqref{eq:model} is affine in the sense that the regressor $x_t$ has the form $x_t= [\tilde{x}_t^\top\; 1]^\top$, with $\tilde{x}_t\in \Re^{n-1}$, then a necessary condition for $\theta^\star$ to solve problem \eqref{eq:L1-Cost} is that 
\begin{equation}\label{eq:necessary-affine}
	\Big|\left|\mathbb{I}^+(\theta^\star)\right|-\left|\mathbb{I}^-(\theta^\star)\right|\Big|\leq \left|\mathbb{I}^0(\theta^\star)\right|.
\end{equation}
Here, the outer bars $|\cdot|$ refer to absolute value while the inner ones which apply to sets refer to cardinality. 
\end{cor}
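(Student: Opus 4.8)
The plan is to invoke statement S2 of Theorem~\ref{thm:equivalence} with a single judicious choice of the test direction $\eta$. Since $\theta^\star$ solves \eqref{eq:L1-Cost}, the necessary condition \eqref{eq:iff-Cond2} must hold for \emph{every} $\eta\in\Re^n$; in particular it holds for $\eta=[\,0\;\cdots\;0\;1\,]^\top$, the last canonical basis vector of $\Re^n$.

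For this choice, the affine structure $x_t=[\tilde{x}_t^\top\;1]^\top$ forces $\eta^\top x_t=1$ for every index $t\in\mathbb{I}$. Substituting into \eqref{eq:iff-Cond2}, the left-hand side collapses to $\big|\,|\mathbb{I}^+(\theta^\star)|-|\mathbb{I}^-(\theta^\star)|\,\big|$ and the right-hand side collapses to $\sum_{t\in\mathbb{I}^0(\theta^\star)}1=|\mathbb{I}^0(\theta^\star)|$, which is exactly \eqref{eq:necessary-affine}. (With the usual convention that a sum over an empty index set is zero, the argument is unaffected when one of the three sets is empty.)

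There is essentially no obstacle here: the corollary is a one-line specialization of the optimality characterization, the only idea being to probe the necessary condition along the coordinate direction that the affine model renders constant across all samples. One could equally derive the same inequality directly from S1 by multiplying \eqref{eq:iff-Cond} on the left by that same basis vector and taking absolute values, using $|\lambda_t|\le 1$; but the route through S2 is the cleanest.
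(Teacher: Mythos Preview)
Your proof is correct and follows exactly the same approach as the paper: both invoke condition S2 of Theorem~\ref{thm:equivalence} with the test direction $\eta=[\bm{0}^\top\;1]^\top$, exploiting the fact that $\eta^\top x_t=1$ for all $t$ under the affine structure.
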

\vspace*{-\baselineskip}
\begin{pf}
The proof is immediate by considering the condition \eqref{eq:iff-Cond2} and taking $\eta=[\bm{0}^\top\; 1]^\top\in \Re^n$. \qed
\end{pf}
Eq. \eqref{eq:necessary-affine} implies that if the measurement model is affine and all the $f_t$'s have the same sign, \ie if one of the cardinalities $\left|\mathbb{I}^+(\theta^o)\right|$ or $\left|\mathbb{I}^-(\theta^o)\right|$ is equal to zero, then problem \eqref{eq:L1-Cost} cannot find the true $\theta^o$ whenever  more than $50\%$ of the elements of the sequence $\left\{f_t\right\}$  are nonzero. \\
Next, we discuss a special case in which the true parameter vector $\theta^o$ in \eqref{eq:model} can, in principle, be obtained asymptotically in the presence of an infinite number of nonzero errors $f_t$'s. 
\begin{prop}[Infinite number of outliers]
\label{prop:infinite-errors}
Assume that the error sequence $\left\{e_t\right\}$ in \eqref{eq:model} is identically  equal to zero. Assume further that the data $\left\{(x_t,y_t)\right\}_{t=1}^N$ are generated such that: 
\begin{itemize}
	\item There is a set $I^0\subset \mathbb{I}$ with  $\left|I^0\right|\geq n$, such that for any $t\in I^0$, $f_t=0$ and  $\rank(X_{I^0})=n$,
	\item For any $t\notin I^0$, $f_t$ is sampled from  a  distribution which is symmetric around zero.  
	\item The regression vector sequence $\left\{x_t\right\}\subset \Re^n$ is drawn from a probability distribution having a finite first moment. 
\end{itemize}
Then 
\begin{equation}\label{eq:Lim-Solution}
	\lim_{N\rightarrow \infty} \argmin_{\theta\in \Re^n}\dfrac{1}{N}\sum_{t=1}^N\left|y_t-x_t^\top \theta\right| = \left\{\theta^o\right\}
\end{equation}
with probability one. 
\end{prop}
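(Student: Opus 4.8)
The plan is to recognise \eqref{eq:Lim-Solution} as a law-of-large-numbers statement for a sequence of random convex functions and to conclude via the classical fact that, for convex functions, pointwise convergence propagates to uniform convergence on compacta, hence to convergence of minimizers. Since $e_t\equiv0$, one has $y_t-x_t^\top\theta=f_t-x_t^\top(\theta-\theta^o)$, so with the change of variable $\eta=\theta-\theta^o$ the normalized objective in \eqref{eq:Lim-Solution} becomes $V_N(\eta):=\tfrac1N\sum_{t=1}^N\big|f_t-x_t^\top\eta\big|$, and it is enough to show $\argmin_\eta V_N(\eta)\to\{0\}$ almost surely. As minimizers are unchanged under subtraction of a constant and $f_t=0$ on $I^0$, I would work with the recentred convex functions
\[
W_N(\eta):=V_N(\eta)-V_N(0)=\frac1N\sum_{t\in I^0}\big|x_t^\top\eta\big|+\frac1N\sum_{t\notin I^0}\Big(\big|f_t-x_t^\top\eta\big|-\big|f_t\big|\Big),
\]
which satisfy $W_N(0)=0$.

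\paragraph{Convergence}
Each summand $|f_t-x_t^\top\eta|-|f_t|$ is bounded in modulus by $|x_t^\top\eta|\le\|\eta\|\,\|x_t\|$, and $\E\|x_t\|<\infty$ by hypothesis; hence, for each fixed $\eta$, the strong law of large numbers (the pairs $(x_t,f_t)$ being i.i.d. across $t$) yields $\tfrac1N\sum_{t\notin I^0}(|f_t-x_t^\top\eta|-|f_t|)\to g_1(\eta):=\E\big[\,|f-x^\top\eta|-|f|\,\big]$ a.s., while $\tfrac1N\sum_{t\in I^0}|x_t^\top\eta|\to g_0(\eta)$ a.s. for some nonnegative convex $g_0$ with $g_0(0)=0$ ($g_0\equiv0$ when $I^0$ is finite). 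Taking the intersection over a countable dense set of $\eta$'s, on a single probability-one event $W_N\to W:=g_0+g_1$ pointwise; since the $W_N$ and $W$ are finite convex functions, a standard result on convex functions (see, \eg \cite{Rockafellar97}) upgrades this to pointwise convergence everywhere and to uniform convergence on every compact subset of $\Re^n$.

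\paragraph{The limit is minimized only at $\eta=0$}
Conditioning on $x$ gives $g_1(\eta)=\E\big[\,\E[\,|f-x^\top\eta|-|f|\mid x\,]\,\big]\ge 0$, because for $f$ symmetric about $0$ the value $0$ is a median of $f$, so that $\E|f-a|\ge\E|f|$ for every $a\in\Re$; together with $g_0\ge0$ this shows $W(\eta)\ge0=W(0)$, so $\eta=0$ minimizes $W$. The substantive point is \emph{uniqueness}: one shows $W(\eta)>0$ for $\eta\neq0$ by using that $\E|f-a|=\E|f|$ forces $f$ to place no mass on the open interval strictly between $0$ and $a$, so that — the outlier distribution charging every neighbourhood of $0$ — $g_1(\eta)=0$ can hold only when $x^\top\eta=0$ almost surely; the genericity built into the hypotheses (in particular $\rank(X_{I^0})=n$, which prevents the law of $x$ from being supported on a hyperplane through the origin) then forces $\eta=0$. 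This is the step where the shape of the outlier distribution near $0$ and the genericity of the regressors genuinely enter, and it is the main obstacle; the remainder is routine.

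\paragraph{Conclusion}
Finally, I would transfer uniqueness of the limiting minimizer to the minimizers of $W_N$. If some $\eta_N^\star\in\argmin_\eta W_N$ had $\|\eta_N^\star\|\ge\varepsilon$ along a subsequence, then, by convexity of $W_N$ and $W_N(\eta_N^\star)\le W_N(0)=0$, the point $\tilde\eta_N$ at which the segment $[0,\eta_N^\star]$ meets the sphere $\{\|\eta\|=\varepsilon\}$ satisfies $W_N(\tilde\eta_N)\le0$; passing to a further subsequence with $\tilde\eta_N\to\tilde\eta$, $\|\tilde\eta\|=\varepsilon$, uniform convergence on that sphere gives $W(\tilde\eta)\le0=W(0)$, contradicting uniqueness. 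Hence every sequence of minimizers of $W_N$ converges to $0$; since $\argmin_\theta\tfrac1N\sum_t|y_t-x_t^\top\theta|$ is nonempty for each $N$ ($\|\phi(\theta)\|_1$ being coercive because $\rank(X_{I^0})=n$ implies $\rank(X)=n$) and $W_N(0)=0=\min W$ keeps $\theta^o$ asymptotically optimal, we obtain $\argmin_\theta\tfrac1N\sum_t|y_t-x_t^\top\theta|=\theta^o+\argmin_\eta W_N\to\{\theta^o\}$ almost surely, \ie \eqref{eq:Lim-Solution}.
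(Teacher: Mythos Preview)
Your route is genuinely different from the paper's. The paper does not pass to a limiting objective at all: it verifies the first–order optimality condition S1$'$ of Theorem~\ref{thm:equivalence} directly at $\theta^o$ for the finite-$N$ problem. Since $y_t-x_t^\top\theta^o=f_t$ and $f_t$ is symmetric and independent of $x_t$, the two empirical averages $\frac{1}{|\mathbb{I}^\pm(\theta^o)|}\sum_{t\in\mathbb{I}^\pm(\theta^o)}x_t$ both estimate $\E[x]$; by the LLN their difference tends to zero, so condition~\eqref{eq:iff-Cond} is met asymptotically with all $\lambda_t=0$, and uniqueness comes straight from $\rank(X_{I^0})=n$ via the S1$'$ clause. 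The argument is short because Theorem~\ref{thm:equivalence} has already done the heavy lifting.

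Your epi-convergence argument is self-contained and closer to standard $M$-estimation consistency proofs, which is nice, but the uniqueness step has a real gap. When $I^0$ is finite (the intended situation), you correctly note $g_0\equiv 0$, so the whole burden falls on $g_1$. Your claim that $g_1(\eta)=0$ forces $x^\top\eta=0$ a.s.\ relies on the parenthetical ``the outlier distribution charging every neighbourhood of $0$'', which is \emph{not} among the hypotheses: symmetry alone allows, say, $f_t\in\{-1,+1\}$ with equal probability, for which $\E|f-a|=\E|f|$ for every $|a|\le 1$. If in addition the law of $x$ has bounded support, $W$ is identically zero on a whole neighbourhood of the origin and your minimizer-convergence step collapses. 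The inference ``$\rank(X_{I^0})=n$ prevents the law of $x$ from being supported on a hyperplane'' is also not quite what you need here; the issue is a flat limit, not a degenerate support.

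The reason the paper's approach escapes this is structural: by checking S1$'$ at finite $N$, the rank condition on $X_{I^0}$ is used \emph{undiluted} (it is exactly the uniqueness clause of S1$'$), whereas in your limit it gets divided by $N$ and disappears into $g_0\equiv 0$. If you want to keep the epi-convergence framework, you would need either an extra hypothesis on the outlier law near $0$, or to avoid normalising by $N$ on the $I^0$ block so that the rank-$n$ term survives in the limit.
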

\vspace*{-\baselineskip}
\begin{pf}
Under the conditions of the proposition, we have $\Prob(y_t-x_t^\top \theta^o<0)=\Prob(y_t-x_t^\top \theta^o>0)=1/2$, where $\Prob$ denotes  probability measure. 
It follows that $\left|\mathbb{I}^+(\theta^o)\right|$ and $\left|\mathbb{I}^-(\theta^o)\right|$ go jointly to infinity as the total number of samples $N$ tends to infinity. Hence, the expressions $\frac{1}{\left|\mathbb{I}^+(\theta^o)\right|}\sum_{t\in \mathbb{I}^+(\theta^o)}x_t$ and $\frac{1}{\left|\mathbb{I}^-(\theta^o)\right|}\sum_{t\in \mathbb{I}^-(\theta^o)}x_t$ are both sample estimates for the expectation of the process $\left\{x_t\right\}$.  By the law of large numbers, as $N\rightarrow \infty$, the two quantities converge to the true expectation of the process  $\left\{x_t\right\}$ with probability one, so that 
$$\lim_{N\rightarrow \infty } \Big[\dfrac{1}{\left|\mathbb{I}^+(\theta^o)\right|}\sum_{t\in \mathbb{I}^+(\theta^o)}x_t-\dfrac{1}{\left|\mathbb{I}^-(\theta^o)\right|}\sum_{t\in \mathbb{I}^-(\theta^o)}x_t\Big]=0.$$
As a consequence, $\theta^o$ satisfies condition S1' of Theorem \ref{thm:equivalence}  asymptotically with $\lambda_t=0$ for any $t\in \mathbb{I}^0(\theta^o)=I^0$. Hence the solution of the $\ell_1$ minimization problem  tends to $\theta^o$ with probability one  as the number of samples approaches infinity. \qed   
\end{pf}

\subsection{Worst-case necessary and sufficient conditions}
The conditions  \eqref{eq:iff-Cond}-\eqref{eq:min_alpha<1} derived in Theorem \ref{thm:equivalence} characterize completely the solution to the $\ell_1$-problem. However such conditions depend on which data points $(x_t,y_t)$ are affected by the gross errors and on the sign of the $f_t's$. We wish now to find  necessary and sufficient conditions that depend solely on the number of gross errors (or, equivalently on the number of zero elements in the sequence $\left\{f_t\right\}$). 
\begin{cor}[Necessary and sufficient conditions] 
\label{cor:NS-Conds} 
Let $d$ be an integer. Then the following statements are equivalent:
{
\setstretch{.3}
\begin{enumerate}
\item[\emph{(i)}] 
\begin{equation}\label{eq:(i)}
\begin{aligned}
		\forall \: \theta\in \Re^n, \forall \bm{y}\in & \Re^N, \: |\mathbb{I}^0(\theta)|\geq d \\
		                                & \Rightarrow \: \theta\in \argmin_{w\in \Re^n}\left\|\phi(w)\right\|_1
\end{aligned}
\end{equation}
\item[\emph{(ii)}] 
\begin{equation}\label{eq:(ii)}
	\!\!\!\!\!\!\!\!\max_{\substack{(I,I^c):\\ \left|I\right|=d}}\quad \!\!\!\!\! \max_{\eta\in \Re^n}\Big\{\left\|X_{I^c}^\top\eta\right\|_1 \suchthat \left\|X^\top \eta\right\|_1=1\Big\}\leq 1/2
\end{equation}
\item[\emph{(iii)}]
\begin{equation}
\label{eq:(iii)}
	\!\!\!\!\!\!\!\! \!\!\!\!\max_{\substack{(I,I^c):\\ \left|I\right|=d}}\: \: \max_{h\in \left\{\pm 1\right\}^{\left|I^c\right|}}\min_{\alpha\in \Re^{|I|}}\Big\{\left\|\alpha\right\|_\infty \suchthat  X_{I^c}h=X_I\alpha\Big\}\leq 1
\end{equation}
\end{enumerate}
}
\end{cor}
In \eqref{eq:(ii)}-\eqref{eq:(iii)} and similar equations in the paper, the leftmost maximum  is taken over the set of those partitions $(I,I^c)$ of $\mathbb{I}$ that satisfy $\left|I\right|=d$. Eq. \eqref{eq:(iii)} should be read with the implicit assumption that the inequality fails to hold whenever the optimization problem is not feasible. 
\begin{pf}[of Corollary \ref{cor:NS-Conds}]
That (ii) and (iii) are equivalent is a statement that results directly from the equivalence of \eqref{eq:iff-Cond2} and \eqref{eq:min_alpha<1} in Theorem \ref{thm:equivalence}. To see this, let $\theta\in\Re^n$ be a solution to problem \eqref{eq:L1-Cost} and set 
 $I^c = \mathbb{I}^-(\theta)\cup\mathbb{I}^+(\theta)$, $I=\mathbb{I}^0(\theta)$, $h_{I^c}\in \left\{-1,+1\right\}^{|I^c|}$ such that $h_i=+1$   if $i\in \mathbb{I}^+(\theta)$ and $h_i=-1$ if $i\in \mathbb{I}^-(\theta)$.  Then Eq. \eqref{eq:iff-Cond2} can be  written as
 \begin{equation}\label{eq:Equiv(ii)}
	 \begin{aligned}
		 &\left|\eta^\top X_{I^c}h_{I^c}\right|\leq\left\|X_I^\top \eta\right\|_1 \quad \forall \eta \in \Re^n \\
		 \Leftrightarrow & \max_{\eta\in \Re^n}\Big\{\left|\eta^\top X_{I^c}h_{I^c}\right|+\left\|X_{I^c}^\top\eta\right\|_1 \suchthat \left\|X^\top \eta\right\|_1=1\Big\}\leq 1.
	 \end{aligned}
 \end{equation}
Similarly Eq. \eqref{eq:min_alpha<1} reads as 
\begin{equation}\label{eq:Equiv(iii)}
	\min_{\alpha_I}\Big\{\left\|\alpha_I\right\|_\infty\suchthat X_{I^c}h_{I^c}=X_I\alpha_I\Big\}\leq 1.
\end{equation}
The equivalence  (ii) $\Leftrightarrow$ (iii) then follows by applying the chains of maximums  
$\max_{(I,I^c):\left|I\right|=d}\max_{h_{I^c}\in \left\{\pm 1\right\}^{\left|I^c\right|}}$
to each of the equations \eqref{eq:Equiv(ii)} and \eqref{eq:Equiv(iii)} and noting that $\max_{h_{I^c}\in \left\{\pm 1\right\}^{\left|I^c\right|}} \left|\eta^\top X_{I^c}h_{I^c}\right|=\|X_{I^c}^\top \eta\|_1$. 

We shall now establish the equivalence  (i) $\Leftrightarrow$ (ii). 
Let $\theta\in \Re^n$ and $\bm{y}\in \Re^N$ be any vectors  such that $I=\left|\mathbb{I}^0(\theta)\right|=d$. The so-defined $I$ can be any subset of $\mathbb{I}$ provided $\left|I\right|=d$. Hence any $\theta$ satisfying this cardinality constraint solves problem \eqref{eq:L1-Cost} if and only if 
\eqref{eq:Equiv(ii)} holds for any partition $(I,I^c)$ of $\mathbb{I}$ with $\left|I\right|=d$ and for any $h_{I^c}\in \left\{- 1,+1\right\}^{|I^c|}$. This is equivalent to Eq. \eqref{eq:(ii)}. 

Finally, let us observe that 
	$$\max_{(I,I^c):\left|I\right|=d}\quad \!\!\!\!\! \max_{\eta\in \Re^n}\Big\{\left\|X_{I^c}^\top\eta\right\|_1 \suchthat \left\|X^\top \eta\right\|_1=1\Big\}$$
	is a decreasing function of $d$ so that if \eqref{eq:(ii)} holds for some $d_0$, it holds also for any $d\geq d_0$. It follows that (i) $\Leftrightarrow$ (ii), hence completing the proof. \qed   
\end{pf}
\vspace*{-\baselineskip}
It should be mentioned that the equivalence (i) $\Leftrightarrow$ (ii) was also obtained in earlier papers, see e.g., \cite{Xu14-Automatica,Xu11-IT}. Uniqueness of the solution follow in a similar way as in the proof of Corollary \ref{cor:NS-Conds} by invoking conditions S1' and S2' of Theorem \ref{thm:equivalence}. 
\begin{cor}[Uniqueness] 
\label{cor:NS-Conds-2}
Let $d$ be an integer. Then the following statements are equivalent.
{\setstretch{.2} 
\begin{enumerate}
\itemsep=.05cm
\item[\emph{(i')}]   
\begin{equation}\label{eq:(i')}
\begin{aligned}
\forall \: \theta\in \Re^n, \forall \bm{y}\in & \Re^N, \: |\mathbb{I}^0(\theta)|\geq d \\
	                     & \Rightarrow \: \argmin_{w\in \Re^n}\left\|\phi(w)\right\|_1=\big\{\theta\big\}
\end{aligned}
\end{equation}
\item[\emph{(ii')}] Eq. \eqref{eq:(ii)} holds with strict inequality. 
\item[\emph{(iii')}]
\begin{equation}
\label{eq:(iii')}
\renewcommand\arraystretch{.5}
\begin{aligned}
		\!\!\!\!\!\!\!\! \!\!\!\!\max_{\substack{(I,I^c):\\\left|I\right|=d}}\:  &\max_{h\in \left\{\pm 1\right\}^{\left|I^c\right|}}\min_{\alpha}\Big\{\left\|\alpha\right\|_\infty  \suchthat\Bigg.\\
	         &\: \:  \: X_{I^c}h=X_I\alpha,\\
	         & \: \: \:  \exists S\subset I, \rank(X_S)=n, \left\|\alpha_S\right\|_\infty<1\Big\}\leq 1 													
\end{aligned}
\end{equation}
\end{enumerate}
}
\end{cor}
\begin{rem}
Corollaries \ref{cor:NS-Conds} and \ref{cor:NS-Conds-2} imply the following. If there exists an integer $d$ such that \eqref{eq:(ii)} or \eqref{eq:(iii)} holds and  $\left\{\theta\in \Re^n:\left\|\phi(\theta)\right\|_0\leq N-d\right\}\neq \emptyset$, then $\argmin_{\theta}\left\|\phi(\theta)\right\|_0\subset \argmin_{\theta}\left\|\phi(\theta)\right\|_1$. 
It follows under these conditions that whenever $\theta^o$ solves the $\ell_0$ problem \eqref{eq:L0-Pbm}, it also solves the $\ell_1$ problem \eqref{eq:L1-Cost}. 
In particular $\argmin_{\theta}\left\|\phi(\theta)\right\|_1=\left\{\theta^o\right\}$ $\Rightarrow$ 
$\argmin_{\theta}\left\|\phi(\theta)\right\|_0=\left\{\theta^o\right\}$.  
\end{rem}

It should be noted that when the data are noise-free, there always exists a $d$ such that \eqref{eq:(i)}-\eqref{eq:(iii)} hold. For example $d=N$ is the maximum possible value that satisfies these conditions. 
Let us denote by $\pi^o(X)$ the minimum integer $d$ such that the conditions \eqref{eq:(i)}-\eqref{eq:(iii)} hold, that is, 
\begin{equation}\label{eq:Pi-O}
	\pi^o(X)=\min\big\{d\in \mathbb{I}\: \suchthat  \mbox{ Eq. \eqref{eq:(ii)} is true}\big\}.
\end{equation}
Such a number $\pi^o(X)$ depends only on the matrix $X$. It  can be viewed as a measure of the richness properties of the regressor matrix $X$. Recoverability of the true parameter vector $\theta^o$ by the least $\ell_1$-norm estimator \eqref{eq:L1-Cost} in the face of gross errors is enhanced when $\pi^o(X)$ is small. We may hence say that the smaller  $\pi^o(X)$, the richer (or more generic) the regressors in $X$ are.
 
Computing $\pi^o(X)$ directly from the definition \eqref{eq:Pi-O} is a hard combinatorial problem with a complexity comparable to that of the $\ell_0$ problem \eqref{eq:L0-Pbm}. An algorithm of slightly reduced complexity but still combinatorial has been derived in \cite{Sharon09-ACC} for this purpose.  Here, we ask the question of whether $\pi^o(X)$ can be more cheaply estimated in a somewhat efficient way. Such estimates are most likely over-estimates and lead to sufficient conditions for exact recoverability of the parameter vector $\theta^o$ in the presence of gross errors sequence $\left\{f_t\right\}$. 
\vspace*{-\baselineskip}
\subsection{Sufficient conditions of recoverability by convex optimization}
\label{subsec:Sufficient-Conditions}
We start by introducing the following notations : 
\begin{align}
	&v_1(k)=\max_{(I,I^c):\left|I\right|=k\geq \nu_n(X)}\left\|X_I^\top(X_IX_I^\top)^{-1}X_{I^c}\right\|_\infty \label{eq:v1(k)}\\
	&v_2(k)=\max_{(I,I^c):\left|I\right|=k}\left\|X_{I^c}^\top(XX^\top)^{-1}X\right\|_1, \label{eq:v2(k)}
\end{align}
where the maximum is taken over the set of those partitions $(I,I^c)$ of $\mathbb{I}$ that satisfy $\left|I\right|=k$. 
In addition, let 
\begin{equation}\label{eq:k1(X)}
	k_1(X)=\min_{k\in \mathbb{I},k\geq \nu_n(X)}\big\{k: v_1(k)\leq 1\big\}
\end{equation}
 and 
\begin{equation}\label{eq:k2(X)}
	k_2(X)=\min_{k\in \mathbb{I}}\big\{k: v_2(k)\leq 1/2\big\}.
\end{equation}
Assuming that $\rank(X)=n$, it can be seen that the numbers  $k_i(X)$, $i=1,2$,  are well-defined. 
First, note that $\nu_n(X)\leq N$  so that the condition $k_i(X)\geq \nu_n(X)$ is achievable. Moreover, by considering the trivial partition $(I,I^c)$ with $I=\mathbb{I}$ and $I^c=\emptyset$, we see that a possible (the largest indeed) value for $k_i(X)$ is $N$.

\begin{thm}[Sufficient condition for exact recovery]\label{thm:X'(XX')}
Assume $\rank(X)=n$. Then the numbers $k_1(X)$ and $k_2(X)$ satisfy 
	\begin{multline}\label{eq:min-k1-k2}
		\forall \theta \in \Re^n, \forall \bm{y}\in  \Re^N \: \left|\mathbb{I}^0(\theta)\right|\geq \min(k_1(X),k_2(X)) \\\Rightarrow \theta\in \argmin_{w\in \Re^n}\left\|\phi(w)\right\|_1
	\end{multline}
where $\phi(w)=\bm{y}-X^\top w$. 	
If in addition all the inequalities involved in the definition of $k_1(X)$ and $k_2(X)$ are strict, then the second part of \eqref{eq:min-k1-k2}  becomes $\argmin_{w\in \Re^n}\left\|\phi(w)\right\|_1=\left\{\theta\right\}$, that is, $\theta$ is the unique minimizer of \eqref{eq:L1-Cost}. 
\end{thm}
\vspace*{-\baselineskip}
\begin{pf}
To prove the first statement, we just need to show that 
\begin{equation}
\min\big(k_1(X),k_2(X)\big)\geq \pi^o(X). 
\end{equation}
\textit{Part 1:} $k_1(X)\geq \pi^o(X)$. \\ 
Define  
$$v_0(k)=\max_{\substack{(I,I^c):\\ \left|I\right|=k}}\: \: \max_{h\in \left\{\pm 1\right\}^{\left|I^c\right|}}\min_{\alpha\in \Re^{|I|}}\Big\{\left\|\alpha\right\|_\infty \suchthat  X_{I^c}h=X_I\alpha\Big\}
$$
that is, $v_0(k)$ corresponds to the left hand side of \eqref{eq:(iii)} (with $d$ replaced by $k$). 
By making use of Corollary \ref{cor:NS-Conds} and the definitions \eqref{eq:v1(k)} and \eqref{eq:k1(X)}, it is enough to show that $v_0(k)\leq v_1(k)$. For this purpose, let $(I,I^c)$ be an arbitrary partition of $\mathbb{I}$ such that $|I|\geq \nu_n(X)$. 
Consider the problem
\begin{equation}\label{eq:alpha-opt}
	\min_{\alpha\in \Re^{|I|}}\big\{\left\|\alpha\right\|_\infty : X_{I^c}h=X_{I}\alpha\big\},
\end{equation}
where  $h\in \left\{-1,+1\right\}^{|I^c|}$ but otherwise arbitrary. 
 Let $p^*(h)$ be the optimal value  of problem  \eqref{eq:alpha-opt}  and pose 
$$\alpha^*(h) = \argmin_{\alpha\in \Re^{|I|}}\big\{\left\|\alpha\right\|_2 \mbox{ s.t. } X_{I^c}h=X_{I}\alpha\big\}.$$
 Since $\alpha^*(h)$ is a feasible point for problem \eqref{eq:alpha-opt},  it must hold  that   $p^*(h)\leq \left\|\alpha^*(h)\right\|_\infty$. 
 The so-defined   $\alpha^*(h)$ is the well-known least Euclidean-norm solution to an underdetermined system of linear equations \cite{Boyd04-Book}; $\alpha^*(h)$ can be analytically expressed as $\alpha^*(h) =X_I^\top(X_IX_I^\top)^{-1}X_{I^c}h$ for all $h\in \left\{-1,+1\right\}^{|I^c|}$.
As a consequence, 
$$
\begin{aligned}
	\max_{h\in \left\{\pm 1\right\}^{\left|I^c\right|}} p^*(h)&\leq \max_{h\in \left\{\pm 1\right\}^{\left|I^c\right|}}\left\|X_I^\top(X_IX_I^\top)^{-1}X_{I^c}h\right\|_\infty \\
	&\leq \max_{h\in \left\{\pm 1\right\}^{\left|I^c\right|}}\left\|X_I^\top(X_IX_I^\top)^{-1}X_{I^c}\right\|_\infty \left\|h\right\|_\infty\\
	& =\left\|X_I^\top(X_IX_I^\top)^{-1}X_{I^c}\right\|_\infty. 
\end{aligned}
$$ 
 The last equality uses  $\left\|h\right\|_\infty =1$.   
It follows that if  $v_0(k)\leq v_1(k)$ hence proving that $k_1(X)\geq \pi^o(X)$.

\textit{Part 2:} $k_2(X)\geq \pi^o(X)$ \\
Proceeding from Corollary \ref{cor:NS-Conds} and the definitions \eqref{eq:v2(k)} and \eqref{eq:k2(X)},  we just need to show that 
$$
\max_{(I,I^c):\left|I\right|=k}\quad \!\!\!\!\! \max_{\eta\in \Re^n}\Big\{\left\|X_{I^c}^\top\eta\right\|_1 \suchthat \left\|X^\top \eta\right\|_1=1\Big\}\leq v_2(k).
$$ 
To this end, set  $b=X^\top \eta$. Then $b\in \im(X^\top)$ and $\eta=(XX^\top)^{-1}Xb$. It follows that  
$$
\begin{aligned}
	\max_{\eta\in \Re^n}&\Big\{\left\|X_{I^c}^\top\eta\right\|_1 \suchthat \left\|X^\top \eta\right\|_1=1\Big\}\\
	& \!\!\!= \max_{b\in \im(X^\top)}\Big\{\left\|X_{I^c}^\top (XX^\top)^{-1}Xb\right\|_1 \suchthat \left\|b\right\|_1=1\Big\}\\
	&\!\!\!\leq \left\|X_{I^c}^\top (XX^\top)^{-1}X\right\|_1.
\end{aligned}
$$ 
Taking now the maximum over all partitions $(I,I^c)$ of $\mathbb{I}$, $|I|=d$, the result follows. 

\textit{Uniqueness.}
This is a straightforward consequence of Corollary \ref{cor:NS-Conds-2}. \qed
\end{pf}
\vspace*{-\baselineskip}
Evaluating numerically  $k_1(X)$ and $k_2(X)$ is still a combinatorial problem. Next we investigate some over-estimates of $\pi^o(X)$ which are free from the maximization over sets $(I,I^c)$. The new thresholds have the important advantage of being more easily computable.  
\begin{thm}[Another sufficient condition]
\label{prop:sufficient-condition} 
Assume that $\nu_n(X)\leq N-1$ and define the following numbers  
\begin{align}
	&r(X)=\max_{k\in \mathbb{I}}\left|x_k(XX^\top)^{-1}x_k\right| \label{eq:r(X)} \\
	& \xi(X)=\max_{k\in \mathbb{I}}\min_{\gamma_k\in \Re^{N-1}}\Big\{\left\|\gamma_k\right\|_\infty \suchthat x_k=X_{\neq k}\gamma_k\Big\} \label{eq:xi(X)} 
\end{align}
where $X_{\neq k}\triangleq X_{\mathbb{I}\setminus \{k\}}$ is the matrix obtained from $X$ by removing its $k$-th column. 
Then the following statement is true: 
$\forall p\in \left\{\frac{1}{r(X)},1+\frac{1}{\xi(X)}\right\} $,  
\begin{equation}\label{eq:sufficient}
\begin{aligned}
	\forall \theta\in \Re^n, \forall \bm{y}\in \Re^N,\: 	&	\left|\mathbb{I}^0(\theta)\right|>N-\dfrac{p}{2} \\
	&\Rightarrow \argmin_{w\in \Re^n}\left\|\phi(w)\right\|_1=\left\{\theta\right\}. 
\end{aligned}
\end{equation}
\end{thm}
\vspace*{-\baselineskip}
\begin{pf}
The proof is decomposed into two cases. \\
\textit{Case 1:} $p=1/r(X)$.   
	From Theorem \ref{thm:X'(XX')}, it is known that 
	$\left\|X_{I^c}^\top (XX^\top)^{-1}X\right\|_1<1/2$, $I^c=\mathbb{I}\setminus\mathbb{I}^0(\theta)$, is a sufficient condition for $\theta$ to be the unique minimizer of \eqref{eq:L1-Cost}.
Now we use the fact that the $1$-norm of a matrix is the maximum of the $1$-norms of its columns:  
$$
	\begin{aligned}
		\left\|X_{I^c}^\top (XX^\top)^{-1}X\right\|_1& =\max_{t=1,\ldots,N} \left\|X_{I^c}^\top (XX^\top)^{-1}x_t\right\|_1\\
		& = \max_{t=1,\ldots,N} \sum_{k\in I^c}\left|x_k^\top (XX^\top)^{-1}x_t\right|\\
		& \leq \left|I^c\right| r(X).
	\end{aligned}
$$
Therefore a sufficient condition for $\theta$ to be the unique solution of \eqref{eq:L1-Cost} is that
$\left|I^c\right| r(X)<1/2$. The conclusion follows immediately.

\textit{Case 2:} $p=1+1/\xi(X)$. \\  
Since $\nu_n(X)\leq N-1$, each $x_k$, $k\in \mathbb{I}$, can be written as a linear combination of the columns of $X_{\neq k}$. Let $\gamma_k\in \Re^{N-1}$ be any vector satisfying  
$x_k= X_{\neq k}\gamma_k$. It follows that for any $\eta\in \Re^n$,
$$\left|x_k^\top \eta\right|  
\leq \sum_{t\neq k}\left|\gamma_{k,t}\right| \left|x_t^\top \eta\right|\leq \left\|\gamma_k\right\|_\infty\left(\left\|X^\top\eta\right\|_1-\left|x_k^\top \eta\right|\right)$$
with $\gamma_{k,t}$ denoting the entry of $\gamma_k\in \Re^{N-1}$ indexed by $t$. 
Since this holds for any $\gamma_k$ such that $x_k= X_{\neq k}\gamma_k$, it holds also for 
$$\gamma_k^\star=\argmin_{\gamma\in \Re^{N-1}}\Big\{\left\|\gamma\right\|_\infty \suchthat x_k=X_{\neq k}\gamma\Big\}.$$
Hence, 
$$\left|x_k^\top \eta\right| \leq \xi(X)\left(\left\|X^\top\eta\right\|_1-\left|x_k^\top \eta\right|\right)\quad \forall k\in \mathbb{I}, \forall \eta\in \Re^n$$
or, equivalently,
$$\left|x_k^\top \eta\right| \leq \dfrac{\xi(X)} {1+\xi(X)}\left\|X^\top\eta\right\|_1\quad \forall k\in \mathbb{I}, \: \forall \eta\in \Re^n.$$
Summing over the set $I^c$ yields
\begin{equation}\label{eq:concentration-ratio}
	\max_{\eta\neq 0}\dfrac{\left\|X_{I^c}^\top\eta\right\|_1}{\left\|X^\top\eta\right\|_1}\leq \dfrac{\xi(X)} {1+\xi(X)}\left|I^c\right|
\end{equation}
In virtue of \eqref{eq:(ii)}, it appears that for $\theta$ to be the unique minimizer of \eqref{eq:L1-Cost}, it is sufficient that 
$\dfrac{\xi(X)} {1+\xi(X)}\left|I^c\right|<1/2$ from which we see that $\left|I^c\right|<1/2(1+1/\xi(X))$ is a sufficient condition. \qed 
\end{pf}
\vspace*{-\baselineskip}
It should be noted that the numbers $r(X)$ and $\xi(X)$ defined in \eqref{eq:r(X)}  and \eqref{eq:xi(X)}  are both computable from matrix $X$. $r(X)$ is less expensive to evaluate numerically  than $\xi(X)$ but leads in general to a more pessimistic bound than $\xi(X)$ on the number of tolerable outliers. Computing $\xi(X)$ literally from the definition \eqref{eq:xi(X)}, for example by interior point methods,  requires solving about $N$ linear programs having each a worst-case complexity bounded by $O(\sqrt{N}\ln(1/\epsilon))$ where $\epsilon$ refers to the precision demanded \cite{Gondzio12}. 
Empirical evidence tend to suggest that the bound obtained from $\xi(X)$ on the number of correctable outliers is very close to $N-\pi^o(X)$ (see Section \ref{subsec:bounds}). As it turns out, while the computational complexity (polynomial) of $\xi(X)$ is lower than that of the algorithm developed in \cite{Sharon09-ACC} for estimating directly $\pi^o(X)$, it still provides a competitive bound.  
\begin{rem}
$\xi(X)$ can be approximated at a cheaper computational cost by replacing the infinity norm with the $2$-norm. This provides an over-estimate $\widehat{\xi}(X)\geq \xi(X)$ defined by 
$$\begin{aligned}
	\widehat{\xi}(X)&=\max_{k\in \mathbb{I}}\min_{\gamma\in \Re^{N-1}}\Big\{\left\|\gamma\right\|_2 \: \suchthat x_k=X_{\neq k}\gamma\Big\} \label{eq:xi(X)-hat}\\
	&=\max_{k\in \mathbb{I}}\left\|X_{\neq k}^\top(X_{\neq k}X_{\neq k}^\top)^{-1}x_k\right\|_2.
\end{aligned}  
$$
\end{rem}
We conclude this section with a few technical remarks concerning some interesting properties of the numbers $r(X)$ and $\xi(X)$. 
\begin{lem}[Some properties of $r(X)$]\label{lem:r(X)}
Under the assumption that $\nu_n(X)\leq N-1$, $r(X)$ and $\xi(X)$ satisfy:  
\begin{equation}\label{eq:r(X)-nu(X)}
	\max\Big\{\frac{1}{r(X)},1+\frac{1}{\xi(X)}\Big\}\leq N-\nu_n(X)+1,
\end{equation} 
\begin{equation}\label{eq:r(X)-k2(X)}
	N-\dfrac{1}{2r(X)}\geq k_2(X)\geq \pi^o(X).
\end{equation}
\end{lem}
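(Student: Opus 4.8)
The plan is to work throughout with the matrix $P:=X^\top(XX^\top)^{-1}X\in\Re^{N\times N}$, which is well defined since $\rank(X)=n$ and which is the orthogonal projector onto $\im(X^\top)$ (the $n$-dimensional row space of $X$); note that $r(X)=\|P\|_{\max}$ by \eqref{eq:r(X)}. The one nontrivial ingredient I would isolate first is a combinatorial reinterpretation of $N-\nu_n(X)+1$, namely that it equals the least possible number of nonzero entries of a nonzero vector in the row space: $N-\nu_n(X)+1=\min_{\eta\neq 0}\|X^\top\eta\|_0$. To justify this, associate with every $\eta\neq 0$ the set $Z_\eta=\{k\in\mathbb{I}:x_k^\top\eta=0\}$; the columns indexed by $Z_\eta$ all lie in the hyperplane $\eta^\perp$, so $\rank(X_{Z_\eta})\le n-1$, and conversely every column set $\mathcal S$ with $\rank(X_{\mathcal S})<n$ is contained in some $Z_\eta$. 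Hence the largest rank-deficient column set has cardinality $\max_{\eta\neq 0}|Z_\eta|$, which by Definition \ref{def:genericity} is exactly $\nu_n(X)-1$; since $\|X^\top\eta\|_0=N-|Z_\eta|$, the identity follows.

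With this in hand, \eqref{eq:r(X)-nu(X)} is short. Put $\rho:=N-\nu_n(X)+1$ and choose $\eta^\star\neq 0$ attaining the minimum above, so $z:=X^\top\eta^\star$ is nonzero with exactly $\rho$ nonzero coordinates. Because $z\in\im(X^\top)$ and $P$ projects onto that subspace, $Pz=z$. I would then chain two elementary norm bounds: $\|z\|_\infty=\|Pz\|_\infty\le\|P\|_{\max}\,\|z\|_1=r(X)\,\|z\|_1$ (each coordinate of $Pz$ being a combination of the entries of $z$ with coefficients bounded by $\|P\|_{\max}$), while $\|z\|_1\le\|z\|_0\,\|z\|_\infty=\rho\,\|z\|_\infty$ since $z$ has only $\rho$ nonzero entries. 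Combining gives $\|z\|_\infty\le\rho\, r(X)\,\|z\|_\infty$, and dividing by $\|z\|_\infty>0$ yields $1\le\rho\, r(X)$, i.e. $1/r(X)\le\rho=N-\nu_n(X)+1$.

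For \eqref{eq:r(X)-k2(X)} I would simply reuse the coarse estimate already performed in the proof of Theorem \ref{prop:sufficient-condition}: for any partition $(I,I^c)$ of $\mathbb{I}$,
\[
\big\|X_{I^c}^\top(XX^\top)^{-1}X\big\|_1=\max_{t\in\mathbb{I}}\sum_{k\in I^c}\big|x_k^\top(XX^\top)^{-1}x_t\big|\le |I^c|\,r(X),
\]
so that, by the definition of $v_2$, one has $v_2(k)\le (N-k)\,r(X)$ for every $k$. Consequently the defining requirement $v_2(k)\le 1/2$ entering \eqref{eq:k2(X)} is already met whenever $(N-k)\,r(X)\le 1/2$, that is, whenever $k\ge N-\tfrac{1}{2r(X)}$; taking in \eqref{eq:k2(X)} the smallest admissible $k$ then gives $k_2(X)\le N-\tfrac{1}{2r(X)}$, which is \eqref{eq:r(X)-k2(X)}.

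Analytically everything here is routine (two norm inequalities and one bound that literally appears earlier in the paper); the only step that needs an idea, and hence the one I would regard as the crux, is the combinatorial identity $N-\nu_n(X)+1=\min_{\eta\neq 0}\|X^\top\eta\|_0$ together with the observation that $P$ leaves every row-space vector fixed — once these are stated, \eqref{eq:r(X)-nu(X)} drops out immediately and \eqref{eq:r(X)-k2(X)} needs nothing more than the earlier estimate on column sums of $X^\top(XX^\top)^{-1}X$.
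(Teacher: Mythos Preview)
Your argument is correct; for \eqref{eq:r(X)-k2(X)} it is essentially identical to the paper's, both resting on the estimate $\|X_{I^c}^\top(XX^\top)^{-1}X\|_1\le |I^c|\,r(X)$ from the proof of Theorem~\ref{prop:sufficient-condition}. For \eqref{eq:r(X)-nu(X)} the paper reaches the same key inequality $r(X)\ge \|X^\top\eta\|_\infty/\|X^\top\eta\|_1$ by specializing the bound $r(X)\ge \tfrac{1}{|I|}\|X_I^\top(XX^\top)^{-1}X\|_1$ to a singleton $I=\{q(\eta)\}$, whereas you obtain it more directly from the projector identity $Pz=z$ together with $\|Pz\|_\infty\le\|P\|_{\max}\|z\|_1$; the combinatorial identification $N-\nu_n(X)+1=\min_{\eta\neq 0}\|X^\top\eta\|_0$ is used by both. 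Your route is a little cleaner in that it avoids the intermediate operator-norm equality $\|X_I^\top(XX^\top)^{-1}X\|_1=\max_{\eta\neq 0}\|X_I^\top\eta\|_1/\|X^\top\eta\|_1$, but the substance is the same.
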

\vspace*{-\baselineskip}
\begin{pf}
\textit{Proof of \eqref{eq:r(X)-nu(X)}: }\\
\textit{First case: $1/r(X)\leq N-\nu_n(X)+1$} . 
We know from the proof of Theorem \ref{prop:sufficient-condition} (see also Part 2 in the proof of Theorem \ref{thm:X'(XX')}) that 
$$ r(X)\geq \dfrac{1}{\left|I\right|}\left\|X_{I}^\top (XX^\top)^{-1}X\right\|_1\geq\dfrac{1}{\left|I\right|}\max_{\eta \neq 0}\dfrac{\left\|X_{I}^\top \eta\right\|_1}{\left\|X^\top \eta\right\|_1}$$
for any $I\subset \mathbb{I}$. A special case is when the subset $I$ is a singleton of the form $I=\left\{q\right\}$. 
For any $\eta\in \Re^n$, let $T(\eta)=\supp\left(X^\top \eta\right)=\left\{t\in \mathbb{I}:x_t^\top \eta \neq 0\right\}$. 
When $\eta\neq 0$, consider an index $q(\eta)$ such that $q(\eta)\in \argmax_{k\in T(\eta)}\left|x_k^\top \eta\right|$. Then by applying the above inequality with $I=\left\{q(\eta)\right\}$, we get 
$$r(X)\geq \max_{\eta \neq 0}\dfrac{\big|x_{q(\eta)}^\top \eta\big|}{\big\|X_{T(\eta)}^\top \eta\big\|_1}\geq \max_{\eta \neq 0}\dfrac{1}{\big|T(\eta)\big|}  $$
with $\big|T(\eta) \big|$ standing for the cardinality of $T(\eta)$. When $\eta\neq 0$, the smallest value $\big|T(\eta) \big|$ can take is $N-\nu_n(X)+1$ where $\nu_n(X)$ is the number defined by Eq. \eqref{eq:Nu_n_X}. It can therefore be concluded that $r(X)\geq 1/(N-\nu_n(X)+1)$. \\
\textit{Second case: $1+1/\xi(X)\leq N-\nu_n(X)+1$.}
The second case follows by a similar reasoning as in the first one. In effect, according to \cite{Cadzow73-SIAM}, the following equality holds, 
$$\begin{aligned}
	&\min_{\gamma_k\in \Re^{N-1}}\Big\{\left\|\gamma_k\right\|_\infty \suchthat x_k=X_{\neq k}\gamma_k\Big\} \\
	& = \max_{\eta\in \Re^n}\Big\{x_k^\top \eta \: \suchthat \big\|X_{\neq k}^\top\eta\big\|_1=1\Big\}\\
	& =\max_{\eta \neq 0}\Big\{\dfrac{|x_k^\top \eta|}{\big\|X_{\neq k}^\top \eta\big\|_1}\: \suchthat\: x_k^\top\eta\geq 0\Big\}. 
\end{aligned} 
$$
For a given $\eta \neq 0$, pick $q_1(\eta)$ such that 
$$q_1(\eta)\in \argmax_{k\in \mathbb{I}}\left\{x_k^\top \eta: x_k^\top \eta\geq 0\right\}.$$ 
By exploiting the equalities above and using the notation  $T(\eta)$ defined earlier we get that 
$$\xi(X)\geq \max_{\eta \neq 0}\dfrac{|x_{q_1(\eta)}^\top \eta|}{\big\|X_{\neq k}^\top \eta\big\|_1}.$$
Now the conclusion can be reached by arguing similarly as in the \textit{first case}. 

\textit{Proof of \eqref{eq:r(X)-k2(X)}: }
Let $(I,I^c)$ be a partition of $\mathbb{I}$ and set $k=N-1/\left(2 r(X) \right)$. First note that 
$$\left|I\right|\geq k \: \Rightarrow \:  \left|I^c\right| r(X)\leq 1/2. $$ 
On the other hand, we know (from the proof of Theorem \ref{prop:sufficient-condition}) that
$$\big\|X_{I^c}^\top\left(XX^\top\right)^{-1}X\big\|_1\leq  \left|I^c\right| r(X).$$
It follows that $\left|I\right|\geq k$ $\Rightarrow$ $\big\|X_{I^c}^\top\left(XX^\top\right)^{-1}X\big\|_1\leq 1/2$ and hence  
$v_2(k)\leq 1/2$. By invoking the definition of the number $k_2(X)$ in \eqref{eq:k2(X)}, it  can be concluded that  $k\geq k_2(X)$. \qed 
\end{pf}
\vspace{-13pt}
\begin{rem}
For any  nonsingular matrix $T\in \Re^{n\times n}$, $r(TX)=r(X)$, $\xi(TX)=\xi(X)$, $k_i(TX)=k_i(X)$, $i=1,2$. It follows that the numbers $r(X)$, $\xi(X)$ and $k_i(X)$, $i=1,2$, 
depend only on the subspace spanned by the rows of the regressor matrix $X$. 
\end{rem}
\section{Some implementation aspects}
\subsection{Enforcing recoverability by iterative re-weighting}\label{subsec:Sparsity-Enhancing}
The parameter vector $\theta^o$ from the model \eqref{eq:model} can be uniquely recovered by solving the convex problem \eqref{eq:L1-Cost}  if  $\theta^o$ satisfies, for example,  condition \eqref{eq:sufficient} of Theorem \ref{prop:sufficient-condition}. In case this condition is not naturally satisfied, an interesting question is how we can process the data in order to promote it. In this section we discuss an algorithmic strategy for enhancing the recoverability of $\theta^o$ by $\ell_1$ minimization. Our discussion is inspired by \cite{Candes08-JFAA}.   
The idea is to solve a sequence of problems of the type \rf{eq:L1-Cost} with different weights computed iteratively \cite{Candes08-JFAA,Bako11-Automatica}. 
The iterative scheme can be defined for a fixed number $r_{\text{\tiny max}}$ of
iterations as follows. At  iteration $r=0,\ldots,r_{\text{\tiny max}}$, compute 
\begin{equation}\label{eq:theta(r)}
	\theta^{(r)}=\argmin_{\theta\in \Re^n}\sum_{t=1}^{N} w_t^{(r)}\left|y_t-\theta^\top x_t\right|, 
\end{equation}
with weights defined, for all $t$, by $w_t^{(0)}=1/N$, 
and
$$w_t^{(r)}= \dfrac{\xi_t^{(r)}}{\sum_{t=1}^{N} \xi_t^{(r)}}, \quad \mbox{if } r\geq 1,
$$
where $$\xi_t^{(r)}=\dfrac{1}{\left|y_t-x_t^\top\theta^{(r-1)}\right|+\delta},$$
with  $\delta>0$ a small number whose role is to prevent division by zero and $r$ is the iteration number. Note that there are many other reweighting strategies which can be used in \eqref{eq:theta(r)}, see \eg \cite{Chen10-TechReport,Zhao12,Le14-TAC}. 
Since we are dealing here with a sequence of convex optimization problems, they can be numerically implemented using any convex solver. In particular the \verb!CVX! software package \cite{Grant-Boyd_CVX} solves efficiently this category of problems in a Matlab environment. 

\subsection{On the treatment of the noise $\left\{e_t\right\}$}
\label{subsec:Noise}
The formulations \eqref{eq:L0-Pbm} and \eqref{eq:L1-Cost} are convenient when the noise $\left\{e_t\right\}$ is equal to zero. Nevertheless, they are expected to work in the presence of a moderate amount of noise. To take explicitly the noise $\left\{e_t\right\}$ into account, we propose to compute estimates $\hat{\bm{e}}\in \Re^N$ and $\varphi\in \Re^N$ (of $\bm{e}$ and $\bm{f}$ respectively) by minimizing a cost of the form $\|\hat{\bm{e}} \|_2^2+\lambda \|\varphi\|_0 $ under an equality constraint of the form \eqref{eq:model}. 
In other words, we consider the problem 
\begin{equation}
	\minimize_{(\theta,\varphi)\in \Re^n\times \Re^N}  \Big[\dfrac{1}{2}\big\|\bm{y}-X^\top \theta-\varphi \big\|_2^2+\lambda \big\|\varphi\big\|_0\Big]  
\end{equation}
and its convex relaxation,
\begin{equation}\label{eq:regularized-L1}
	\minimize_{(\theta,\varphi)\in \Re^n\times \Re^N} \Big[\dfrac{1}{2}\big\|\bm{y}-X^\top \theta-\varphi \big\|_2^2+\lambda \big\|\varphi\big\|_1\Big]. 
\end{equation}
where $\lambda\geq 0$ is a regularization parameter. 
\begin{lem}\label{lem1:opt-conditions-regularized}
A pair  $(\theta^\star,\varphi^\star)\in \Re^n\times \Re^N$  solves \eqref{eq:regularized-L1} if and only if it satisfies 
\begin{align}
&	XX^\top \theta^\star-X\big(\bm{y}-\varphi^\star\big)=0 \label{eq:opt-cond1-regularized-L1}\\
&	X^\top \theta^\star-\big(\bm{y}-\varphi^\star\big)=-\lambda s(\varphi^\star), \label{eq:opt-cond2-regularized-L1}
\end{align}
where $s(\varphi^\star)$ is a vector in $\Re^N$  whose entries $s_t(\varphi^\star)$, $t=1,\ldots,N$, are defined by:  $s_t(\varphi^\star)=\sign(\varphi_t^\star)$ if  $\varphi_t^\star\neq 0$ and $s_t(\varphi^\star)\in \left[-1,\: 1\right]$ if  $\varphi_t^\star= 0$. 
\end{lem}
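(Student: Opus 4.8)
The plan is to derive the stated optimality conditions by computing the subdifferential of the (convex, nonsmooth) objective in \eqref{eq:regularized-L1} with respect to the pair $(\theta,\varphi)$ and setting zero in it, since the objective is a proper convex function on $\Re^n\times \Re^N$. Write $F(\theta,\varphi)=\tfrac12\|\bm{y}-X^\top\theta-\varphi\|_2^2+\lambda\|\varphi\|_1$. The first term is smooth, so its contribution to the subdifferential is just its gradient; the second term contributes the (separable) subdifferential of $\lambda\|\cdot\|_1$ in the $\varphi$-variable only. Because $F$ is a sum of a differentiable convex function and a convex function, the subdifferential splits additively, and $0\in\partial F(\theta^\star,\varphi^\star)$ is a genuine characterization of global optimality.

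Concretely, I would first compute the partial gradient of the quadratic term: writing $g(\theta,\varphi)=\tfrac12\|\bm{y}-X^\top\theta-\varphi\|_2^2$, one has $\nabla_\theta g = -X(\bm{y}-X^\top\theta-\varphi)$ and $\nabla_\varphi g = -(\bm{y}-X^\top\theta-\varphi)$. Next, for the $\theta$-block there is no nonsmooth term, so stationarity in $\theta$ reads $\nabla_\theta g(\theta^\star,\varphi^\star)=0$, i.e. $-X(\bm{y}-X^\top\theta^\star-\varphi^\star)=0$, which rearranges to \eqref{eq:opt-cond1-regularized-L1}. For the $\varphi$-block, recall that $\partial\|\varphi\|_1$ is the set of vectors $s$ with $s_t=\sign(\varphi_t)$ when $\varphi_t\neq0$ and $s_t\in[-1,1]$ when $\varphi_t=0$ (the standard componentwise characterization of the $\ell_1$ subdifferential, itself a consequence of additivity of subdifferentials over coordinates and $\partial|\cdot|$ at $0$ being $[-1,1]$). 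Stationarity in $\varphi$ then reads $0\in \nabla_\varphi g(\theta^\star,\varphi^\star)+\lambda\,\partial\|\varphi^\star\|_1$, i.e. $-(\bm{y}-X^\top\theta^\star-\varphi^\star)+\lambda s(\varphi^\star)=0$ for some valid subgradient $s(\varphi^\star)$, which rearranges to $X^\top\theta^\star-(\bm{y}-\varphi^\star)=-\lambda s(\varphi^\star)$, precisely \eqref{eq:opt-cond2-regularized-L1}. Both implications (solution $\Rightarrow$ conditions, and conditions $\Rightarrow$ solution) follow at once since for a convex function $0\in\partial F$ at a point is equivalent to that point being a global minimizer.

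I would present the argument in the order: (i) state that optimality is equivalent to $0\in\partial F(\theta^\star,\varphi^\star)$ because $F$ is proper convex; (ii) invoke additivity of the subdifferential to write $\partial F=\{\nabla_\theta g\}\times(\{\nabla_\varphi g\}+\lambda\partial\|\cdot\|_1)$, legitimate because the quadratic part is everywhere differentiable (so Moreau--Rockafellar sum rule applies without a constraint qualification subtlety); (iii) split into the $\theta$-component, giving \eqref{eq:opt-cond1-regularized-L1}; (iv) split into the $\varphi$-component, using the explicit form of $\partial\|\cdot\|_1$, giving \eqref{eq:opt-cond2-regularized-L1}. This mirrors the structure of the proof of S1 in Theorem \ref{thm:equivalence}, which I can reference for the subdifferential bookkeeping.

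There is no serious obstacle here; the only point requiring a little care is making explicit that the sum rule for subdifferentials applies coordinatewise to $\|\varphi\|_1$ and that the differentiable quadratic term contributes only its gradient, so that the two displayed equations together are exactly the condition $0\in\partial F$ — nothing more is hidden. A secondary minor point is to note that \eqref{eq:opt-cond1-regularized-L1} is in fact implied by multiplying \eqref{eq:opt-cond2-regularized-L1} on the left by $X$ together with the fact that $X s(\varphi^\star)$ need not vanish — so one should be mindful to state the two conditions as the componentwise stationarity conditions and not pretend one is redundant; indeed they are genuinely the $\theta$- and $\varphi$-blocks of a single gradient/subgradient inclusion and both are needed.
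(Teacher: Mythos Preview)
Your proposal is correct and follows essentially the same approach as the paper: both identify the objective as a proper convex function, invoke the characterization $0\in\partial F(\theta^\star,\varphi^\star)$, split into the differentiable $\theta$-block (yielding \eqref{eq:opt-cond1-regularized-L1}) and the nonsmooth $\varphi$-block with $\partial\|\cdot\|_1$ (yielding \eqref{eq:opt-cond2-regularized-L1}). Your write-up is slightly more explicit about the sum rule and the coordinatewise structure of $\partial\|\cdot\|_1$, but the substance is identical.
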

\begin{pf}
Let $l(\theta,\varphi)=\dfrac{1}{2}\big\|\bm{y}-X^\top \theta-\varphi \big\|_2^2+\lambda \big\|\varphi\big\|_1$ be the objective function of the problem \eqref{eq:regularized-L1}. Then $l$ is a proper convex function which is differentiable with respect to variable $\theta$ on $\Re^n$ and admits a subdifferential at any variable $\varphi\in \Re^N$. $(\theta^\star,\varphi^\star)$ minimizes $l(\theta,\varphi)$ iff
$0=\nabla_\theta l(\theta^\star,\varphi^\star)$ and  $0\in \partial_{\varphi}l(\theta^\star,\varphi^\star)$. These conditions translate immediately into $XX^\top \theta^\star-X(\bm{y}-\varphi^\star)=0$ and $-\left(\bm{y}-\varphi^\star-X^\top \theta^\star\right)+\lambda s(\varphi^\star)=0$, where $s(\varphi^\star)\in \partial\left\|\varphi^\star\right\|_1$ is a subgradient of $\left\|\varphi\right\|_1$ at $\varphi^\star$. \qed
\end{pf}
It is interesting to note that \eqref{eq:opt-cond1-regularized-L1}-\eqref{eq:opt-cond2-regularized-L1} imply $Xs(\varphi^\star)=0$, which is very similar to \eqref{eq:iff-Cond}. The following lemma characterizes the uniqueness of the solution of \eqref{eq:regularized-L1}. 
\begin{lem}[Uniqueness of solution to \eqref{eq:regularized-L1}]\label{lem2:opt-conditions-regularized}
A pair $(\theta^\star,\varphi^\star)$ is the unique solution to problem \eqref{eq:regularized-L1}   if and only if both of the following statements are true 
\begin{itemize}
	\item[(i)] $(\theta^\star,\varphi^\star)$ satisfies conditions \eqref{eq:opt-cond1-regularized-L1}-\eqref{eq:opt-cond2-regularized-L1} for some $s(\varphi^\star)\in \partial\left\|\varphi^\star\right\|_1$
	\item[(ii)] $\rank(X)=n$ and $\rank(\Psi_{\mathcal{S}^c})=\left|\mathcal{S}^c\right|$.\\
	Here, $\Psi = I_N-X^\top(XX^\top)^{-1} X$, with $I_N$ being the identity matrix of order $N$, $\Psi_{\mathcal{S}^c}$ is a matrix formed with the columns of $\Psi$ indexed by $\mathcal{S}^c$ defined by $\mathcal{S}^c=\mathbb{I}\setminus \mathcal{S}$, with 
$\mathcal{S}=\left\{t\in \mathbb{I}: \left|s_t(\varphi^\star)\right|<1\right\}$. 
\end{itemize}
The expression of  $(\theta^\star,\varphi^\star)$ is then given by:
\begin{align}
	&\theta^\star=(XX^\top)^{-1}X\big(\bm{y}-\varphi^\star\big), \label{eq:unique1}
\end{align}
\begin{equation}\label{eq:unique2}
\begin{aligned}
			& \mbox{If } \left|\mathcal{S}^c\right|=0, \mbox{ then }\varphi^\star=0, \quad   \mbox{otherwise } \\
			&\varphi_{\mathcal{S}^c}^\star=\left(\Psi_{\mathcal{S}^c}^\top \Psi_{\mathcal{S}^c}\right)^{-1}\Psi_{\mathcal{S}^c}^\top\big(\Psi\bm{y}-\lambda s(\varphi^\star)\big), \: \: \varphi_{\mathcal{S}}^\star=0. 
\end{aligned}
\end{equation}
\end{lem}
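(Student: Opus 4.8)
The plan is to reduce uniqueness of the pair $(\theta^\star,\varphi^\star)$ to uniqueness of $\varphi^\star$ alone, and then analyze the latter through the structure of the $\ell_1$ subdifferential. I will take $\lambda>0$ throughout (for $\lambda=0$ problem \eqref{eq:regularized-L1} is never uniquely solvable). The first step is to record, under $\rank(X)=n$, that \eqref{eq:opt-cond1-regularized-L1} forces $\theta^\star=(XX^\top)^{-1}X(\bm{y}-\varphi^\star)$, hence $X^\top\theta^\star=(I_N-\Psi)(\bm{y}-\varphi^\star)$ with $\Psi=I_N-X^\top(XX^\top)^{-1}X$ the orthogonal projector onto $\ker(X)$, so that $\ker(\Psi)=\im(X^\top)$. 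Plugging this into \eqref{eq:opt-cond2-regularized-L1} and simplifying yields the key identity $\Psi\varphi^\star=\Psi\bm{y}-\lambda\,s(\varphi^\star)$. From it I would extract two consequences at once: the subgradient is uniquely pinned down, $s(\varphi^\star)=\lambda^{-1}\Psi(\bm{y}-\varphi^\star)$, and $\varphi^\star_{\mathcal S}=0$, since any index $t$ with $\varphi^\star_t\neq0$ has $|s_t(\varphi^\star)|=1$ and so lies in $\mathcal S^c$.

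For sufficiency, assuming (i)--(ii), I would let $(\bar\theta,\bar\varphi)$ be an arbitrary minimizer and first show it shares the residual of $(\theta^\star,\varphi^\star)$: since $\tfrac12\|\cdot\|_2^2$ is strictly convex while $\|\varphi\|_1$ is convex, evaluating the cost at the midpoint of the two minimizers forces $X^\top\bar\theta+\bar\varphi=X^\top\theta^\star+\varphi^\star$. Feeding this back into \eqref{eq:opt-cond2-regularized-L1} shows every minimizer carries the \emph{same} subgradient $s(\varphi^\star)$, hence $\bar\varphi_{\mathcal S}=0$; combining $\bar\theta=(XX^\top)^{-1}X(\bm{y}-\bar\varphi)$ with the key identity then gives $\Psi_{\mathcal S^c}\bar\varphi_{\mathcal S^c}=\Psi\bm{y}-\lambda s(\varphi^\star)$. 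As $\rank(\Psi_{\mathcal S^c})=|\mathcal S^c|$, this linear system is uniquely solvable, which both pins $\bar\varphi$ (hence $\bar\theta$) down and delivers the closed forms \eqref{eq:unique1}--\eqref{eq:unique2}.

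For necessity, statement (i) is just Lemma \ref{lem1:opt-conditions-regularized} applied to the unique minimizer. If $\rank(X)<n$, picking $v\neq0$ with $X^\top v=0$ makes $(\theta^\star+v,\varphi^\star)$ a second minimizer (same residual, same $\|\varphi\|_1$), so $\rank(X)=n$. If $\rank(\Psi_{\mathcal S^c})<|\mathcal S^c|$, I would take $0\neq w\in\ker(\Psi_{\mathcal S^c})$, extend it by zeros on $\mathcal S$ to $\tilde w\in\Re^N$ (so $\Psi\tilde w=0$, i.e. $\tilde w=X^\top\mu$ for some $\mu$), and exhibit the one-parameter family $(\theta^\star-\epsilon\mu,\varphi^\star+\epsilon\tilde w)$: it preserves the residual, satisfies \eqref{eq:opt-cond1-regularized-L1}, and satisfies \eqref{eq:opt-cond2-regularized-L1} with the unchanged $s(\varphi^\star)$, hence, for $\epsilon\neq0$ small, is a distinct minimizer — contradicting uniqueness.

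The routine parts are the midpoint/strict-convexity step and the algebra behind the identity $\Psi\varphi^\star=\Psi\bm{y}-\lambda s(\varphi^\star)$. The hard part is the last step of the necessity proof: verifying that $s(\varphi^\star)$ is still an admissible subgradient of $\|\cdot\|_1$ at $\varphi^\star+\epsilon\tilde w$. On $\supp(\varphi^\star)$ this is immediate for small $|\epsilon|$ by continuity of the sign, and on $\mathcal S$ the coordinate stays zero; the care is needed on the indices $t\in\mathcal S^c$ with $\varphi^\star_t=0$ and $\tilde w_t\neq0$, where $|s_t(\varphi^\star)|=1$, and one must choose $\sign(\epsilon)=s_t(\varphi^\star)\sign(\tilde w_t)$ consistently across all such $t$.
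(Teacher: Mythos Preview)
Your plan coincides with the paper's proof: eliminate $\theta$ via \eqref{eq:opt-cond1-regularized-L1} under $\rank(X)=n$, pass to the reduced objective in $\varphi$, use strict convexity of the quadratic term to show that $\Psi\varphi$ (hence $s(\varphi^\star)=\lambda^{-1}\Psi(\bm y-\varphi^\star)$) is constant across minimizers, and for necessity perturb along $\ker(\Psi_{\mathcal S^c})$. The paper handles the very step you single out as ``the hard part'' by a one-line appeal to the proof of Theorem~\ref{thm:equivalence}.

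Your hesitation there is justified: the sign-consistency you ask of $\epsilon$ cannot always be achieved, and in fact the necessity direction of the lemma is \emph{false} as stated. Take $n=1$, $N=4$, $X=[\,1\;\;-1\;\;2\;\;2\,]$, $\lambda>0$, $\bm y=\lambda(1,1,1,-1)^\top$. Then $X\bm y=0$, so $(\theta^\star,\varphi^\star)=(0,0)$ satisfies \eqref{eq:opt-cond1-regularized-L1}--\eqref{eq:opt-cond2-regularized-L1} with $s(\varphi^\star)=(1,1,1,-1)^\top$; hence $\mathcal S=\emptyset$, $\mathcal S^c=\{1,2,3,4\}$ and $\rank(\Psi_{\mathcal S^c})=\rank(\Psi)=N-n=3<4=|\mathcal S^c|$, so condition~(ii) fails. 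Yet the minimizer is unique: every minimizer has $\varphi\in\varphi^\star+\ker(\Psi)=\operatorname{span}\{(1,-1,2,2)^\top\}$, and on that line the residual $\bm y-X^\top\theta-\varphi$ stays equal to $\bm y$ while $\lambda\|\varphi\|_1=6\lambda|\epsilon|$ vanishes only at $\epsilon=0$. The only available perturbation direction $\tilde w=(1,-1,2,2)^\top$ produces the sign pattern $(\sign\epsilon,-\sign\epsilon,\sign\epsilon,\sign\epsilon)$, which never matches $s(\varphi^\star)=(1,1,1,-1)$---exactly the inconsistency you anticipated. So condition~(ii) is sufficient for uniqueness but not necessary; the sufficiency half of your argument (and the paper's) stands, but the necessity half cannot be completed.
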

\begin{pf}
$l(\theta,\varphi)$ is a quadratic function of $\theta$. For a fixed $\varphi^\star$, the minimizer  $\theta^\star$ of $l(\theta,\varphi^\star)$  is unique if and only if $X$ has full row rank, i.e., $\rank(X)=n$. The unique value of $\theta^\star$ is expressed in function of $\varphi^\star$ by \eqref{eq:unique1}. Plugging the expression  \eqref{eq:unique1} of $\theta^\star$ in the objective gives 
$$\tilde{l}(\varphi)\triangleq l(\theta^\star,\varphi)=\dfrac{1}{2}\left\|\Psi \bm{y}-\Psi\varphi\right\|_2^2+\lambda\left\|\varphi\right\|_1.$$
The rest of the proof then boils down to showing that the minimizer $\varphi^\star$ of $\tilde{l}(\varphi)$ is unique if and only if $\rank(\Psi_{\mathcal{S}^c})=\left|\mathcal{S}^c\right|$. To begin with, let us point out the following (see also\footnote{It is to be noted that the analysis in \cite{Tibshirani12} provides only a sufficient condition.} \cite{Tibshirani12}). If $\varphi^\star$ and $\xi^\star$ are two minimizers of $\tilde{l}(\varphi)$, then we have necessarily
\begin{align}
	&\Psi \varphi^\star=\Psi\xi^\star \label{eq:Psiphi}\\
	&s(\varphi^\star)=s(\xi^\star). \label{eq:s(phi)}
\end{align} 
The relation \eqref{eq:Psiphi} follows from the strict convexity of $\tilde{l}(\varphi)$ as a function of $\Psi \varphi$. In effect, by changing the optimization variable into $\delta=\Psi \varphi$, $\tilde{l}(\varphi)$ becomes 
$\frac{1}{2}\left\|\Psi \bm{y}-\delta\right\|_2^2+\lambda\left\|\Psi^\dagger\delta+v\right\|_1$, with $v$ a vector in $\ker(\Psi)$ and $\dagger$ referring to generalized inverse. This last function is strictly convex with respect to $\delta$. As a consequence, its minimizer is unique and equal to $\delta^\star=\Psi \varphi^\star$. 
To see why the relation \eqref{eq:s(phi)}  holds, plug the expression \eqref{eq:unique1} of $\theta^\star$ into \eqref{eq:opt-cond2-regularized-L1}. We get $\lambda s(\varphi^\star)=\Psi \bm{y}-\Psi\varphi^\star$. Combining this with \eqref{eq:Psiphi} (i.e., the uniqueness of $\Psi\varphi^\star$) yields immediately \eqref{eq:s(phi)}.   

Let us examine first the case where $\left|\mathcal{S}^c\right|=0$. This is indeed equivalent to $\mathcal{S}=\mathbb{I}$ and so, $\varphi^\star=0$. Would there exist another minimizer $\xi^\star$ of  $\tilde{l}(\varphi)$, it should obey \eqref{eq:s(phi)}, which implies that $\xi^\star$ is necessarily equal to zero. 

Now consider the case $\left|\mathcal{S}^c\right|>0$. \\
\textit{Sufficiency.}
Assume that $\rank(\Psi_{\mathcal{S}^c})=\left|\mathcal{S}^c\right|$. As argued above, any two minimizers $\varphi^\star$ and $\xi^\star$ of $\tilde{l}(\varphi)$ obey \eqref{eq:Psiphi}-\eqref{eq:s(phi)}. From \eqref{eq:s(phi)} we get that $\mathcal{S}\subset \left\{t\in \mathbb{I}:\xi_t^\star=0\right\}$, which implies that $\mathcal{S}^c\supset \supp(\xi^\star)$. As a consequence, we can write \eqref{eq:Psiphi} in the following reduced form $\Psi_{\mathcal{S}^c}(\varphi_{\mathcal{S}^c}^\star-\xi_{\mathcal{S}^c}^\star)=0. $
With $\rank(\Psi_{\mathcal{S}^c})=\left|\mathcal{S}^c\right|$, this implies that $\varphi^\star=\xi^\star$ and that the minimizer of 
$\tilde{l}(\varphi)$ is unique.

\textit{Necessity.} Assume that $\rank(\Psi_{\mathcal{S}^c})<\left|\mathcal{S}^c\right|$. Consider a nonzero vector $\eta\in \Re^N$ such that $\eta_{\mathcal{S}}=0$ and $\eta_{\mathcal{S}^c}\in \ker(\Psi_{\mathcal{S}^c})$. Let $\eta_1=\nu \eta$, with $\nu\neq 0$. It is straightforward to verify that $\Psi \varphi^\star=\Psi (\varphi^\star+\eta_1)$. Note that $\nu$ can be chosen sufficiently small such that $\varphi_t^\star$ and  $\varphi_t^\star+\eta_{1,t}$ have the same sign whenever $\varphi_t^\star\neq 0$. Following a similar path as in the proof of Theorem \ref{thm:equivalence}, we can establish that $s(\varphi^\star)=s(\varphi^\star+\eta_1)$. Finally, with $\Psi \varphi^\star=\Psi (\varphi^\star+\eta_1)$, $s(\varphi^\star)=s(\varphi^\star+\eta_1)$ and the fact that $\varphi^\star$ is an optimal solution (hence satisfying \eqref{eq:opt-cond2-regularized-L1}), it is easy to check that $\varphi^\star+\eta_1$ also satisfies \eqref{eq:opt-cond2-regularized-L1}. By Lemma \ref{lem1:opt-conditions-regularized}, $\varphi^\star+\eta_1$ ($\neq \varphi^\star$) solves \eqref{eq:regularized-L1}. 
Hence, the solution is not unique. 

\textit{Derivation of Eqs \eqref{eq:unique1}-\eqref{eq:unique2}.} These relations result from simple rearrangements of \eqref{eq:opt-cond1-regularized-L1}-\eqref{eq:opt-cond2-regularized-L1}. \qed
\end{pf}
\vspace*{-\baselineskip}
From Lemma \ref{lem2:opt-conditions-regularized}, it appears that the true vector $\bm{f}$ can be found by problem \eqref{eq:regularized-L1} if and only if there is a vector $\hat{\theta}\in \Re^n$ such that $(\hat{\theta},\bm{f})$ satisfies the conditions (i)-(ii) of Lemma \ref{lem2:opt-conditions-regularized}. In particular, $(\hat{\theta},\bm{f})$ must satisfy \eqref{eq:opt-cond2-regularized-L1}. A necessary condition for this is that $\Psi \bm{e} =\lambda s(\bm{f})$. And this implies that the regularization parameter must verify  $\lambda\geq \left\|\Psi \bm{e} \right\|_\infty$ when $\bm{f}=0$, and $\lambda= \left\|\Psi \bm{e} \right\|_\infty$    when $\bm{f}\neq 0$. Note further that if $\bm{e}= 0$  and $\bm{f}\neq 0$, then $\lambda$ must be equal to zero! 
However, if $\lambda$ is set to zero in \eqref{eq:regularized-L1}, then the solution set is 
$$\left\{(\theta,\varphi):\theta=(XX^\top)^{-1}X(\bm{y}-\varphi), \varphi \in \bm{y}+\im(X^\top)\right\}. $$
Since this set contains infinitely many elements, we conclude that it is unlikely to get exactly the true $\bm{f}$ by solving \eqref{eq:regularized-L1} irrespective of the value of the regularization parameter $\lambda$. \\
In any case, the estimation error can be bounded as follows. 
\begin{prop}
\label{prop:error-bound}
Assume that the conditions of Lemma \ref{lem2:opt-conditions-regularized} are satisfied and denote with $(\theta^\star,\varphi^\star)$ the  solution to problem \eqref{eq:regularized-L1}. Then 
{\setstretch{1} 
\begin{equation}\label{eq:error-bound}
		\left\|\theta^\star-\theta^o\right\|_2\leq (K_1\varepsilon+\lambda K_2) +K_1M \sqrt{\dfrac{\left|\mathcal{J}\cap\mathbb{I}^c(\theta^o)\right|}{\left|\mathcal{J}\right|}}
\end{equation}		
		where $\varepsilon = \max_{t\in \mathbb{I}}\left|e_t\right|$,   $M = \max_{t\in \mathbb{I}}\left|f_t\right|$, 		
\begin{equation}
K_1 = \max_{\left|J\right|\geq \nu_n(X)}\!\!\sqrt{\left|J\right|}\left\|(XX^\top)^{-1}(I_n+E_J+2E_J^2+E_J^3)X_J\right\|_2\label{eq:K1}			
\end{equation}
\begin{equation}
			K_2 = \max_{\left|J\right|\geq \nu_n(X)}\sqrt{\left|J\right|}\left\|(X_JX_J^\top)^{-1}X_J\right\|_2 \label{eq:K2}
\end{equation}
}
with 
$E_J=\left(X_{J^c}X_{J^c}^\top\right)\left(X_J X_J^\top\right)^{-1}$, $J\subset \mathbb{I}$. In \eqref{eq:error-bound} $I_n$ is the identity matrix of order $n$ ; the set $\mathcal{J}$ denotes the maximizing argument of \eqref{eq:K1} and $\mathbb{I}^c(\theta^o)=\mathbb{I}\setminus \mathbb{I}^0(\theta^o) $.  
\end{prop}
\begin{pf}
The idea of the proof consists in deriving first an expression of $\theta^\star-\theta^o$ and then working out a bound on its norm. 
From \eqref{eq:unique1} and the data model \eqref{eq:model}, we have
\begin{align*}
	\theta^\star 
	&=(XX^\top)^{-1}X\left(X^\top \theta^o+\bm{e}+\bm{f}-\varphi^\star\right)
\end{align*}
This, by noting that $\varphi_{\mathcal{S}}^\star=0$, can be written as 
$$(XX^\top)\left(\theta^\star-\theta^o\right)=X_{\mathcal{S}}\left(\bm{e}_{\mathcal{S}}+\bm{f}_{\mathcal{S}}\right) +X_{\mathcal{S}^c}\left(\bm{e}_{\mathcal{S}^c}+\bm{f}_{\mathcal{S}^c}-\varphi_{\mathcal{S}^c}^\star\right) $$
Using formula \eqref{eq:unique2} and manipulating a little, we arrive at
$$
\begin{aligned}
(XX^\top)&\left(\theta^\star-\theta^o\right)=\\
&\left[X_{\mathcal{S}}-X_{\mathcal{S}^c}(\Psi_{\mathcal{S}^c}^\top \Psi_{\mathcal{S}^c})^{-1}\Psi_{\mathcal{S}^c}^\top \Psi_{\mathcal{S}}\right]\left(\bm{e}_{\mathcal{S}}+\bm{f}_{\mathcal{S}}\right) \\
	&\quad +\lambda X_{\mathcal{S}^c}(\Psi_{\mathcal{S}^c}^\top \Psi_{\mathcal{S}^c})^{-1}\Psi_{\mathcal{S}^c}^\top s(\varphi^\star).
\end{aligned}
$$
Further calculations using  the Woodbury's matrix identity and exploiting the relation $Xs(\varphi^\star)=0$,  yield
$$\begin{aligned}
	(XX^\top)\left(\theta^\star-\theta^o\right)=
	&\left(I_n+E_\mathcal{S}+2E_\mathcal{S}^2+E_\mathcal{S}^3\right)X_{\mathcal{S}} \left(\bm{e}_{\mathcal{S}}+\bm{f}_{\mathcal{S}}\right)\\
	&-\lambda\left(I+E_\mathcal{S}\right)X_{\mathcal{S}} s(\varphi_{\mathcal{S}}^\star) 
\end{aligned}$$
with $E_\mathcal{S}=\left(X_{\mathcal{S}^c}X_{\mathcal{S}^c}^\top\right)\left(X_{\mathcal{S}}X_{\mathcal{S}}^\top\right)^{-1}$.
The result follows by multiplying with $(XX^\top)^{-1}$, remarking that $(XX^\top)^{-1}=\left(X_{\mathcal{S}}X_{\mathcal{S}}^\top\right)^{-1}(I_n+E_\mathcal{S})^{-1}$ and taking the euclidean norm. 
\end{pf}
It is interesting to notice that the numbers $K_1$, $K_2$ and $\left|\mathcal{J}\right|$ depend solely on the data matrix $X$. Moreover, when the sequence $\left\{f_t\right\}$ contains only a few nonzero elements (but otherwise arbitrarily large), the last term in \eqref{eq:error-bound} is likely to vanish. As a consequence, even though the bound $M$ can be large in principle, the bound on the estimation error can be kept at a reasonable level. 

\section{Extension to multivariable systems}\label{sec:multivariable}
We consider now the multivariable analogue of model \eqref{eq:model} written in the  form
\begin{equation}\label{eq:multivariable}
	y_t=A^ox_t+f_t+e_t,
\end{equation}
where $y_t\in \Re^m$ is the output vector at time $t$, $\left\{f_t\right\}\subset \Re^m$ is the sequence of errors, $\left\{e_t\right\}\subset \Re^m$ is the noise sequence, $A^o\in \Re^{m\times n}$ is the parameter matrix. 

The question of interest is how to recover the matrix $A^o$ from measurements corrupted by a vector sequence of sparse errors $\left\{f_t\right\}$.  The sparse optimization approach is still applicable to this case, that is, we can formulate the estimation problem as 
\begin{equation}\label{eq:Ao-L0}
	\minimize_{A\in \Re^{m\times n}}\Big|\Big\{t: y_t-A x_t\neq 0\Big\}\Big|
\end{equation}
with $|\cdot|$ standing for cardinality. It can be easily verified that Theorem \ref{thm:Uniqueness_L0} applies to  \eqref{eq:Ao-L0} as well.  

The convex relaxation takes the form of a nonsmooth optimization with a cost functional consisting of a sum-of-norms of errors \cite{Ohlsson2010a,Chen11-IFAC}, 
\begin{equation}\label{eq:matrix-relax}
	\minimize_{A\in \Re^{m\times n}} \sum_{t=1}^N\left\|y_t-Ax_t\right\|_2 
\end{equation}
with $\left\|\cdot\right\|_2$ referring to the Euclidean norm. 
\begin{thm}\label{thm:equivalence2}
A matrix $A^\star\in \Re^{m\times n}$ solves the sum-of-norms problem \eqref{eq:matrix-relax},  if and only if any of the following equivalent statements holds: 
\begin{enumerate} 
\item[\emph{T1.}] There exists a sequence of vectors $\left\{\beta_t\right\}_{t\in \mathbb{I}^0(A^\star )}\subset \mathcal{B}_2(0,1)$ such that 
\begin{equation}\label{eq:iff-cond-T1}
	\sum_{t\notin \mathbb{I}^0(A^\star)} v_t^\star x_t^\top +\sum_{t\in \mathbb{I}^0(A^\star )}\beta_t x_t^\top=0,
\end{equation}
where $v_t^\star = (y_t-A^\star x_t)/\left\|y_t-A^\star x_t\right\|_2$. Here, $\mathcal{B}_2(0,1)\subset \Re^m$ is  the Euclidean unit ball of $\Re^m$.
\item[\emph{T2.}] For any matrix $\Lambda \in \Re^{m\times n}$, 
\begin{equation}\label{eq:iff-cond-T2}
	\Big|\sum_{t\notin \mathbb{I}^0(A^\star)}{v_t^\star}^\top \Lambda x_t\Big|\leq \sum_{t\in \mathbb{I}^0(A^\star)}\big\|\Lambda x_t\big\|_2.
\end{equation}
 \item[\emph{T3.}] The optimal value of the problem 
\begin{equation}\label{eq:min_alpha-2}
\begin{aligned}
		&\min_{Z\in \Re^{m\times p}}\:\left\|Z\right\|_{2,\infty} 
		\: \mbox{\emph{subject to} }\:  V^\star X_{\mathbb{I}^c(A^\star)}^\top=ZX_{\mathbb{I}^0(A^\star)}^\top
\end{aligned}
\end{equation}
 $p=\left|\mathbb{I}^0(A^\star)\right|$ and $V^\star$ being a matrix formed with the unit 2-norm vectors $v_t^\star$, for $t\in \mathbb{I}\setminus\mathbb{I}^0(A^\star)$, \\
\emph{is  smaller than $1$}.
\end{enumerate}

\noindent Moreover,  the solution $A^\star$ is unique if and only if any of the following assertions is true:
\begin{enumerate}
	\item[\emph{T1'.}] \eqref{eq:iff-cond-T1} holds and $\rank(X_\mathcal{T})=n$ where  $\mathcal{T}=\left\{t\in \mathbb{I}^0(A^\star): \left\|\beta_t\right\|_2<1\right\}$. 
	\item[\emph{T2'.}] \eqref{eq:iff-cond-T2} holds with strict inequality symbol for all $\Lambda\in \Re^{m\times n}$, $\Lambda\neq 0$.
\end{enumerate}
\end{thm}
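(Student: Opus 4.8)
The plan is to mirror the proof of Theorem~\ref{thm:equivalence} step by step, replacing absolute values by Euclidean norms and scalar subgradients by vector subgradients. The objective $\Phi(A)=\sum_{t=1}^N\|y_t-Ax_t\|_2$ is a proper convex function of $A$, so $A^\star$ is a minimizer iff $0\in\partial\Phi(A^\star)$. The first step is to compute this subdifferential: for $t\notin\mathbb{I}^0(A^\star)$ the map $A\mapsto\|y_t-Ax_t\|_2$ is differentiable with gradient $-v_t^\star x_t^\top$ (where $v_t^\star$ is the normalized residual), while for $t\in\mathbb{I}^0(A^\star)$ the subdifferential is $\{-\beta x_t^\top:\|\beta\|_2\le 1\}$ by the chain rule applied to $\beta\mapsto\|\beta\|_2$ at the origin. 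Additivity of subdifferentials then gives $\partial\Phi(A^\star)=-\sum_{t\notin\mathbb{I}^0(A^\star)}v_t^\star x_t^\top+\sum_{t\in\mathbb{I}^0(A^\star)}\conv\{-\beta x_t^\top:\|\beta\|_2\le 1\}$, and $0\in\partial\Phi(A^\star)$ is exactly statement T1.

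For T2, I would split $\Phi=Q+H$ where $Q(A)=\sum_{t\notin\mathbb{I}^0(A^\star)}\|y_t-Ax_t\|_2$ is differentiable at $A^\star$ with $\nabla Q(A^\star)=-\sum_{t\notin\mathbb{I}^0(A^\star)}v_t^\star x_t^\top$, and $H(A)=\sum_{t\in\mathbb{I}^0(A^\star)}\|y_t-Ax_t\|_2$. Then $0\in\partial\Phi(A^\star)$ iff $-\nabla Q(A^\star)\in\partial H(A^\star)$, and since $\partial H$ is symmetric about the origin this is equivalent to $\pm\nabla Q(A^\star)\in\partial H(A^\star)$, i.e. to the subgradient inequality $|\langle\nabla Q(A^\star),\Lambda\rangle|\le H(A^\star+\Lambda)-H(A^\star)$ for all $\Lambda$, where $\langle\cdot,\cdot\rangle$ is the trace inner product. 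Using $y_t=A^\star x_t$ for $t\in\mathbb{I}^0(A^\star)$ gives $H(A^\star+\Lambda)-H(A^\star)=\sum_{t\in\mathbb{I}^0(A^\star)}\|\Lambda x_t\|_2$, and $\langle\nabla Q(A^\star),\Lambda\rangle=-\sum_{t\notin\mathbb{I}^0(A^\star)}{v_t^\star}^\top\Lambda x_t$, yielding T2. The equivalence T1~$\Leftrightarrow$~T3 is again immediate once one recognizes \eqref{eq:min_alpha-2} as the reformulation of T1 in terms of the matrix $Z$ whose columns are the $\beta_t$, with $\|Z\|_{2,\infty}$ being the largest column 2-norm.

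For uniqueness, I would prove T2' first: strict inequality in \eqref{eq:iff-cond-T2} for all $\Lambda\ne 0$ gives, via the same manipulation, $-\langle\nabla Q(A^\star),\Lambda\rangle< H(A^\star+\Lambda)-H(A^\star)$ for $\Lambda\ne 0$; adding the (non-strict) subgradient inequality $\langle\nabla Q(A^\star),\Lambda\rangle\le Q(A^\star+\Lambda)-Q(A^\star)$ yields $\Phi(A^\star)<\Phi(A^\star+\Lambda)$ for all $\Lambda\ne 0$, so $A^\star$ is the unique minimizer. For T1', sufficiency follows from $\rank(X_{\mathcal{T}})=n$: any $\Lambda\ne 0$ has $\Lambda x_{t_0}\ne 0$ for some $t_0\in\mathcal{T}$, and since $\|\beta_{t_0}\|_2<1$ one can multiply \eqref{eq:iff-cond-T1} on the right by something to extract strict inequality in T2', reducing to the case just handled. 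Necessity is the main obstacle: assuming $\rank(X_{\mathcal{T}})<n$, pick $0\ne\Lambda$ with columns chosen so that $\Lambda x_t=0$ for $t\in\mathcal{T}$ and scaled small enough (by a factor $\nu$) that $y_t-(A^\star+\nu\Lambda)x_t$ stays nonzero and nearly parallel to $v_t^\star$ for $t\notin\mathbb{I}^0(A^\star)$; then verify that $A^\star+\nu\Lambda$ still satisfies \eqref{eq:iff-cond-T1} with the same $\{\beta_t\}_{t\in\mathcal{T}}$ (after re-sorting the boundary indices $t\in\mathbb{I}^0(A^\star)\setminus\mathcal{T}$ with $\|\beta_t\|_2=1$ into the "nonzero-residual" sum), producing a second minimizer. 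The delicate point, compared with the scalar case, is that re-sorting a boundary index $t$ with $\|\beta_t\|_2=1$ requires the new residual direction $v_t$ to equal $\beta_t$, which one must arrange by continuity of the residual as $\nu\to 0$; I expect this continuity/parallelism argument to be the step needing the most care.
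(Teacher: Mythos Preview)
Your approach for T1--T3 and for the \emph{sufficiency} of T1$'$ and T2$'$ is correct and is exactly what the paper intends: its own proof is simply ``similar to that of Theorem~\ref{thm:equivalence}; omitted.'' The subdifferential computation, the $Q+H$ splitting, and the derivation of strict minimality from T2$'$ all carry over verbatim once absolute values are replaced by Euclidean norms and the trace inner product is used.

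Your instinct that the \emph{necessity} of T1$'$ is ``the step needing the most care'' is more than justified---in fact the necessity direction of the uniqueness claim does not survive the passage to $m\ge 2$, so the argument you sketch cannot be completed. The structural reason is that $\|y_t-Ax_t\|_2$ is \emph{not} piecewise linear in $A$ when $m\ge 2$: a vanishing directional derivative $\Phi'(A^\star;\Lambda_0)=0$ no longer forces $\Phi$ to be constant along $\Lambda_0$, because the second-order term
\[
\sum_{t\notin\mathbb{I}^0(A^\star)}\frac{\|\Lambda_0 x_t\|_2^2\,\|y_t-A^\star x_t\|_2^2-\big((y_t-A^\star x_t)^\top\Lambda_0 x_t\big)^2}{\|y_t-A^\star x_t\|_2^{3}}
\]
is generically strictly positive. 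A concrete counterexample: take $m=2$, $n=1$, $N=3$, $x_t=1$ for all $t$, and $y_1=(1,0)^\top$, $y_2=(-1,0)^\top$, $y_3=(0,1)^\top$. Then $\Phi$ is the geometric-median objective for three non-collinear points, whose minimizer $A^\star=(0,1/\sqrt{3})^\top$ is unique; yet $\mathbb{I}^0(A^\star)=\emptyset$, hence $\mathcal{T}=\emptyset$ and $\rank(X_{\mathcal{T}})=0<1=n$, so T1$'$ fails, and T2$'$ fails as well (the right-hand side of \eqref{eq:iff-cond-T2} is an empty sum, while the left-hand side equals $|(v_1^\star+v_2^\star+v_3^\star)^\top\Lambda|=0$). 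The obstruction you isolated---that after perturbation the residual directions $v_t$ for $t\notin\mathbb{I}^0(A^\star)$ no longer coincide exactly with $v_t^\star$, and the new directions at boundary indices need not equal the old $\beta_t$---is precisely where the scalar proof breaks; in the scalar case $\sign(\cdot)$ is locally constant, but the map $r\mapsto r/\|r\|_2$ is not. So T1$'$ and T2$'$ should be read here as sufficient conditions for uniqueness, and you should not expect to establish the converse.
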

\begin{pf}
The proof is similar to that of Theorem \ref{thm:equivalence}. It is therefore omitted here. 
\end{pf}
It is interesting to note that based on Theorem \ref{thm:equivalence2}, the analysis carried out in the previous sections can be easily generalized to the multivariable case.  In particular, Proposition \ref{prop:infinite-errors} and  Theorems \ref{thm:X'(XX')}-\ref{prop:sufficient-condition} can be restated for the multivariable model \eqref{eq:multivariable} with only some slight modifications. For illustration purpose, we just state below the multivariable counterpart of  Corollary \ref{cor:NS-Conds}. 
\begin{cor}\label{thm:sufficient-multivariable}
Let $d$ be an integer.
Then the following three statements are equivalent. 
\begin{enumerate}
\item[\emph{(j)}]
\begin{multline}
	\!\!\! \!\!\!\! \forall \: A\in \Re^{m\times n}, \forall Y\in \Re^{m\times N}, \: \left|\mathbb{I}^0\left(A\right)\right|\geq d\\ \quad \Rightarrow A\in \argmin_{W\in \Re^{m\times n}}
	\left\|Y-WX\right\|_{2,\col}
\end{multline}
	\item[\emph{(jj)}] 
	\begin{equation}\label{eq:(jj)}
	\begin{aligned}
\!\!\! \!\!\!\!\!\!\!\!\!	\max_{\substack{(I,I^c):\\\left|I\right|=d}}\: \: \max_{\Lambda\in \Re^{m\times n}}&\left\{\left\|\Lambda X_{I^c}\right\|_{2,\col}\suchthat \: \left\|\Lambda X\right\|_{2,\col}=1\right\} \leq 1/2
	\end{aligned}
\end{equation}  
\item[\emph{(jjj)}] 
\begin{equation}\label{eq:(jjj)}
	\begin{aligned}
	\!\!\!\!\!\!\!\!\!\!\!	\max_{\substack{(I,I^c):\\\left|I\right|=d}} \max_{V\in \mathbb{B}^{m\times |I^c|}} \min_{Z\in \Re^{m\times |I|}} &\left\{\left\|Z\right\|_{2,\infty} \suchthat\right.  \\
		&  \Big. \: X_{I^c}V^\top =X_IZ^\top \Big\}\leq 1
	\end{aligned}
\end{equation}
with $\mathbb{B}^{m\times q}=\Big\{\big[\begin{matrix}b_1 & \cdots & b_q\end{matrix}\big]\in \Re^{m\times q}, \: b_i\in \mathcal{B}_2(0,1)\Big\}$.
\end{enumerate} 
\end{cor}
\begin{pf}
The proof is similar to that of Corollary \ref{cor:NS-Conds}.  \\
(jj) $\Leftrightarrow$ (jjj) : We exploit the equivalence between \eqref{eq:iff-cond-T2} and \eqref{eq:min_alpha-2}. 
First by letting $I=\mathbb{I}^0(A^\star)$, $I^c=\mathbb{I}^c(A^\star)$, $V_{I^c}$ be a matrix collecting all the vectors $v_t^\star\in \mathbb{B}^m$, $t\in I^c$,  \eqref{eq:iff-cond-T2} can equivalently be written as
$$ 
\begin{aligned}
	\max_{\Lambda\in \Re^{m\times n}}&\Big[\big|\tr\big(V_{I^c}^\top \Lambda X_{I^c}\big)\big|+\left\|\Lambda X_{I^c}\right\|_{2,\col}\suchthat \left\|\Lambda X\right\|_{2,\col}=1\Big]\\
	&\leq 1
\end{aligned}
$$
Maximizing over all the sets $(I,I^c)$ satisfying $\left|I\right|=d$ and over all $V \in \mathbb{B}^{m\times |I^c|}$ yields \eqref{eq:(jj)} after remarking that $\max_{V_{I^c}\in \mathbb{B}^{m\times |I^c|}}\big|\tr\big(V_{I^c}^\top \Lambda X_{I^c}\big)\big|=\left\|\Lambda X_{I^c}\right\|_{2,\col}$. 
Proceeding similarly from \eqref{eq:min_alpha-2}, yields  \eqref{eq:(jjj)}. 
Hence (jj) $\Leftrightarrow$ (jjj).\\
(j) $\Leftrightarrow$ (jj) : By Theorem \ref{thm:equivalence2}  and the first part of the proof, any matrix $A$ with $\left|\mathbb{I}^0(A)\right|=d$ minimizes the objective $\left\|Y-WX\right\|_{2,\col}$ (with variable $W$) if and only if \eqref{eq:(jj)} holds. The conclusion is obtained by observing that 
$$\max_{\substack{(I,I^c):\\\left|I\right|=d}}\max_{\Lambda\in \Re^{m\times n}}\left\{\left\|\Lambda X_{I^c}\right\|_{2,\col}\suchthat \: \left\|\Lambda X\right\|_{2,\col}=1\right\}$$ 
is decreasing as a function of $d$.\qed 
\end{pf}

An analogue of Corollary \ref{cor:NS-Conds-2} can be obtained similarly. 
It is interesting to note that the statement \eqref{eq:sufficient} of Theorem \ref{prop:sufficient-condition} holds unchanged in the multivariable case with $p=\xi(X)$. 
\begin{rem}[Geometric median]
In the special case where $n=1$, the matrix $A^o$ in \eqref{eq:multivariable} reduces to a vector $a^o\in \Re^m$. Assuming $x_t=1$ for all $t$, the problem \eqref{eq:matrix-relax} then becomes 
\begin{equation}\label{eq:geo-median}
	\minimize_{a\in \Re^m} \sum_{t=1}^N\left\|y(t)-a\right\|_2.
\end{equation}
This is the so-called geometric median problem. \\
By applying \eqref{eq:iff-cond-T2}, we can see that $a^o$ solves \eqref{eq:geo-median} if
$|\mathbb{I}^0(a^o)|/N\geq 1/2$. 
\end{rem}

\section{Numerical illustration}\label{sec:simulation}

\subsection{Static models subject to intermittent gross errors}\label{subsec:static}
In our first experiment we consider static linear and affine models of
the form \eqref{eq:model} with $n=4$ and $N=500$. The affine model
refers to the case where the regressor $x_t$ has the form
$x_t=[\tilde{x}_t^\top\; 1]^\top$.  The goal is to estimate the
probability of exact recovery of the true parameter vector by problem
\eqref{eq:L1-Cost} in function of the number of nonzero elements in
the sequence $\left\{f_t\right\}$. For this purpose, the noise
$\left\{e_t\right\}$ is set to zero. The nonzero elements of
$\left\{f_t\right\}$ are drawn from a Gaussian distribution with mean
$100$ and variance $1000^2$.  For each level of sparsity (\ie proportion
of nonzeros), a Monte Carlo simulation of size $100$ is carried out
with randomly generated static/affine models and $500$ data samples at
each run. Repeating this for four situations (linear/affine and
linear/affine with positive $f_t$'s), we obtain the results depicted
in Figure \ref{fig:proba}. We observe that in the linear case, problem
\eqref{eq:L1-Cost} provides the true parameter vector when the output
is affected by up to $80\%$ of nonzero gross errors. This is because
the data $\left\{x_t\right\}$ which were sampled from a Gaussian
distribution are very generic. 
In the case of affine models, the performance is a little less good. If we set all $f_t$'s to have the same sign, then as suggested by condition \eqref{eq:necessary-affine},  the percentage of outliers that can be corrected  by the optimization problem \eqref{eq:L1-Cost} cannot exceed $50\%$. 
\begin{figure*}
\centering
\subfloat[Linear static model]{\includegraphics[width=7.5cm,height=5.2cm]{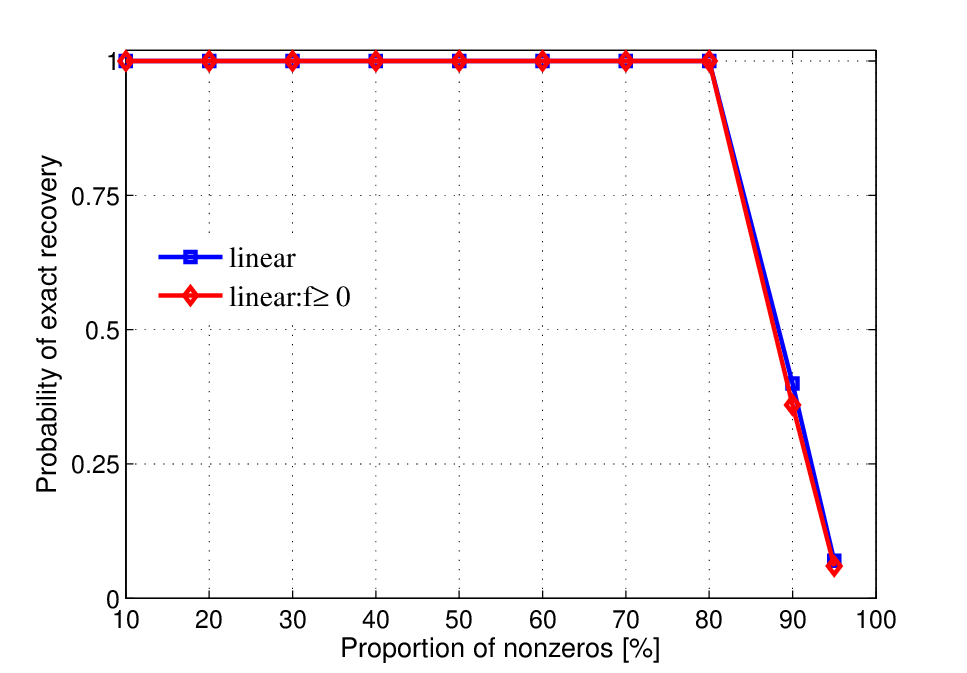}}
\subfloat[Affine static model]{\includegraphics[width=7.5cm,height=5.2cm]{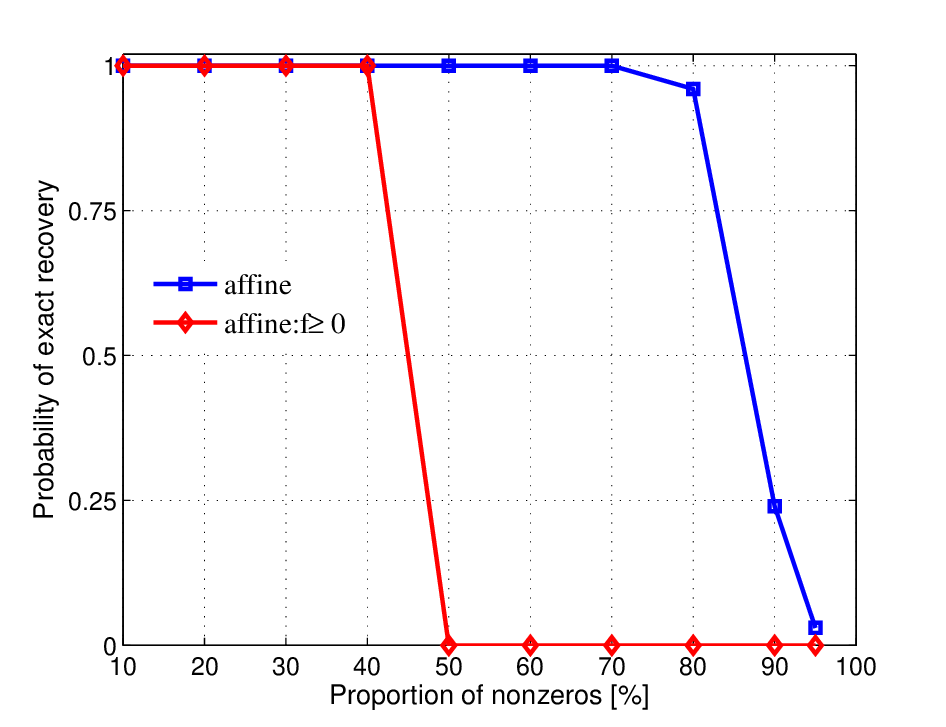}}
\caption{Estimates of probabilities of exact recovery when noise $\left\{e_t\right\}$ is equal to zero. From a numerical point of view, the recovery is said to be exact if $\big\|\hat{\theta}-\theta^o\big\|_2\leq 10^{-5}$ and inexact otherwise, with $\hat{\theta}$ being the estimated parameter vector.}
\label{fig:proba}
\end{figure*}
\subsection{Static models with both noise and gross errors}\label{subsec:static-2}
Consider now the case of static models of the form \eqref{eq:model} in
the presence of both $\left\{e_t\right\}$ and $\left\{f_t\right\}$
sampled from Gaussian distributions $\mathcal{N}(0,\sigma_e^2)$ and
$\mathcal{N}(100,1000^2)$ respectively. The variance $\sigma_e^2$ is
selected so as to achieve a certain signal to noise ratio 
before the gross error sequence is added to the output. Again, by
carrying out a Monte-Carlo simulation of size $100$ with different
sparsity levels and randomly generated models at each run, we obtain
the average errors plotted in Figure \ref{fig:error}. It turns out
that the results returned by problems \eqref{eq:L1-Cost} and
\eqref{eq:regularized-L1} with $\lambda=0.10$ are almost the same for an SNR in $\left\{10\mbox{ dB}, 20\mbox{ dB}\right\}$.
The performance can be assessed by comparing with an "oracle" estimate \ie the least squares estimate one would obtain if the locations of zeros in the sequence $\left\{f_t\right\}$ were known. The results in Figure \ref{fig:error} tend to suggest that the proposed approach performs very well. For the current numerical experiment, our results are very close to the ideal estimate when the proportion of nonzeros is less than $70\%$. 
\begin{figure*}
\centering
\subfloat[Static model: SNR = 20 dB]{\includegraphics[width=8cm,height=5cm]{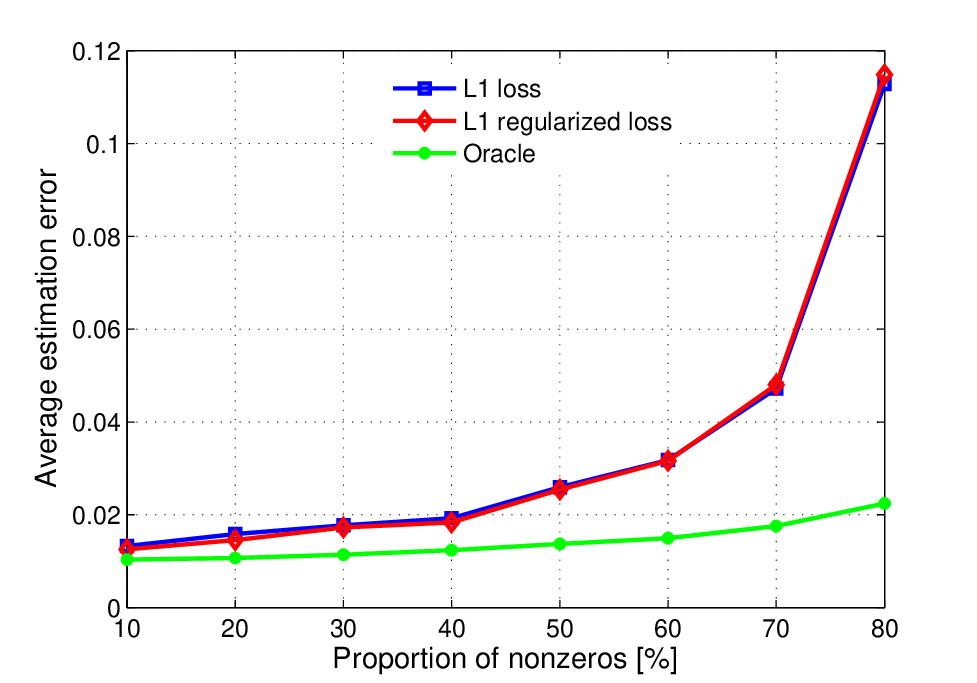}}
\subfloat[Static model: SNR = 10 dB]{\includegraphics[width=8cm,height=5cm]{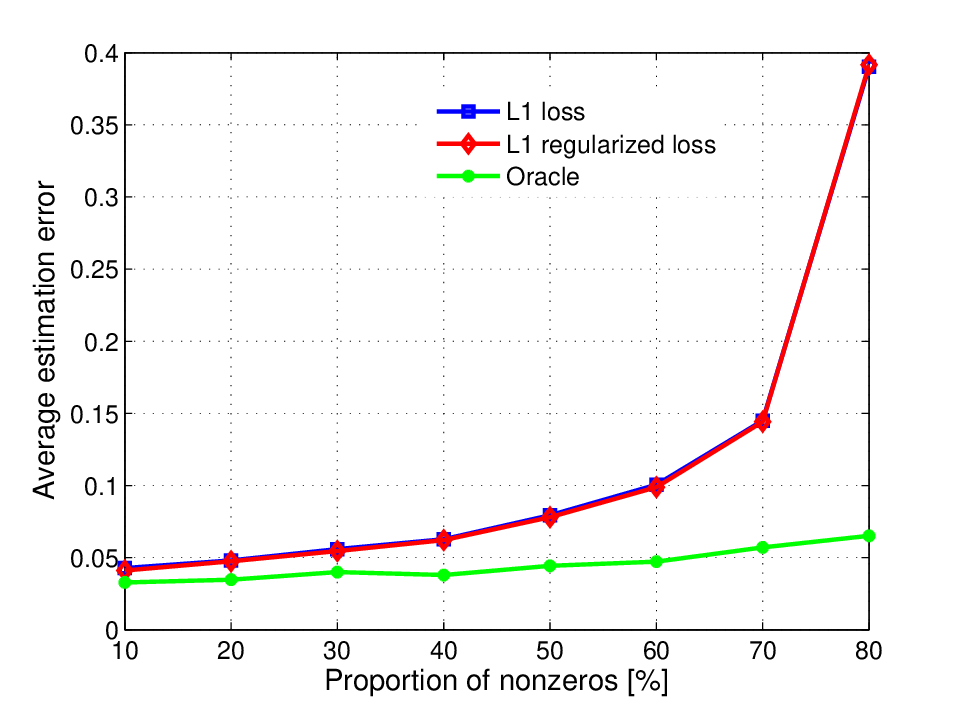}}
\caption{Average relative estimation error versus sparsity level.}
\label{fig:error}
\end{figure*}
\subsection{Dynamic linear models subject to sensor intermittent faults}
In the case when \eqref{eq:model} represents a dynamic ARX model
subject to gross errors, it can be observed (see
Fig. \ref{fig:proba2}) that the probabilities of exact recovery are
much smaller than in the static case studied in Section
\ref{subsec:static}. This difference is related to the richness (or
genericity) 
of the regression vectors (columns of $X$) involved in each case. In the static example above, the vectors $\left\{x_t\right\}$ are freely sampled in any direction of $\Re^n$ by following a Gaussian distribution. In the dynamic system case however, the data vectors $\left\{x_t\right\}$ constructed as in \eqref{eq:Regressor} are constrained to lie on a manifold. As a result, the 
data matrix $X$ generated by the dynamic system is less generic. According to conditions of the paper, and \eqref{eq:sufficient} in particular, there is a threshold depending on the richness of the data such that exact recovery is guaranteed whenever the number of zero entries in $\bm{f}$ is larger than this threshold. So, the more generic the data contained in $X$ are, the more outliers can be removed by problem \eqref{eq:L1-Cost}.  
Note that the lack of sufficient genericity can be compensated (to some extent) by implementing the iterative sparsity enhancing technique  ($\ell_1$ reweighted algorithm) described in Section \ref{subsec:Sparsity-Enhancing}. This leads, for only two iterations, to significantly improved results as represented in Figure \ref{fig:L1reweigthed}.  
\begin{figure*}
\centering
\subfloat[Linear dynamic model]{\includegraphics[width=8cm,height=5cm]{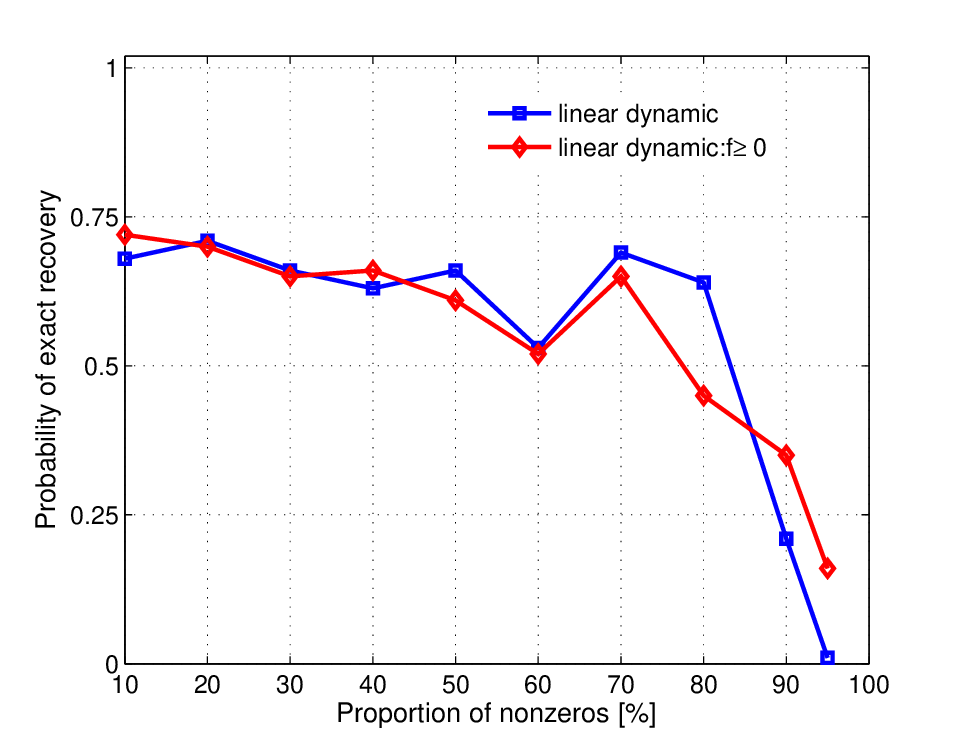}}
\subfloat[Affine dynamic model]{\includegraphics[width=8cm,height=5cm]{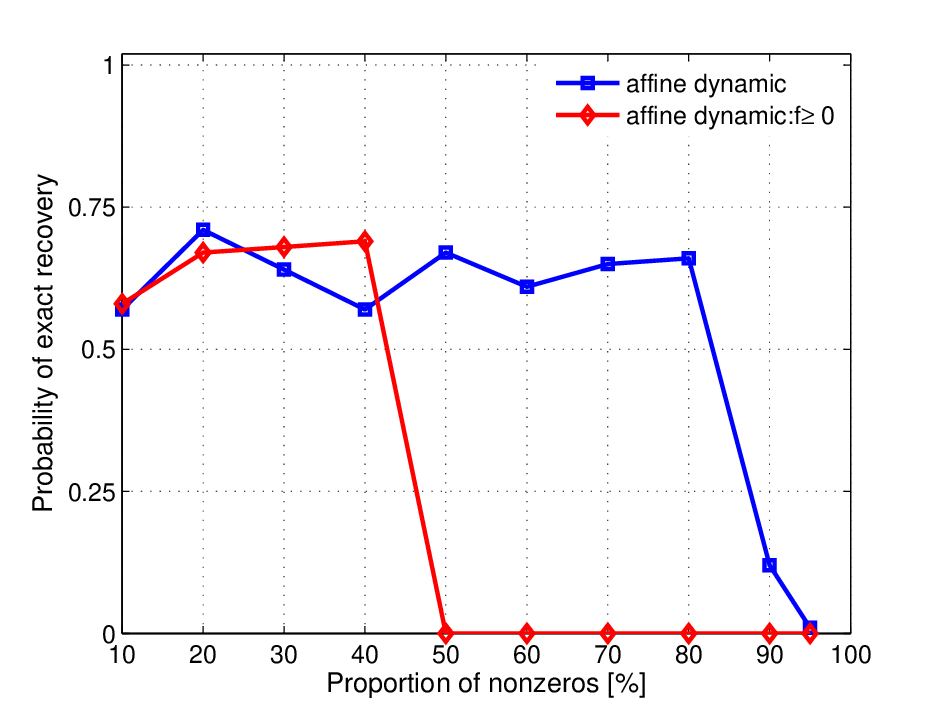}}
\caption{Estimates of probabilities of exact recovery when noise $\left\{e_t\right\}$ is equal to zero. Results of a Monte-Carlo simulation of size $100$ with randomly generated linear ARX systems of order $n_a=n_b=2$.}
\label{fig:proba2}
\end{figure*}
\begin{figure}
\centering
\includegraphics[width=8.5cm,height=5cm]{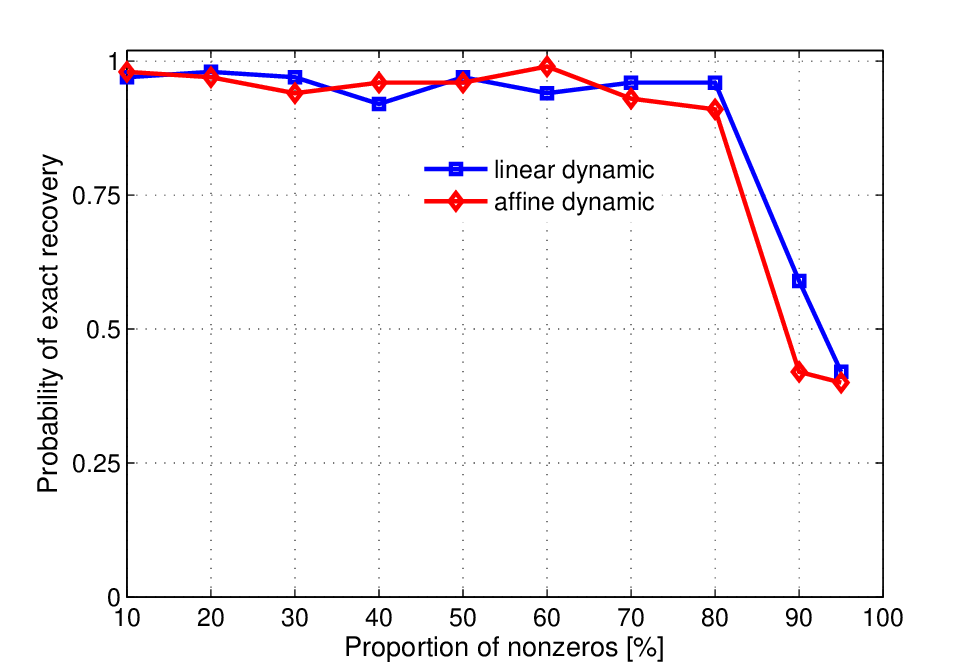}
\caption{Estimates of probabilities of exact recovery by reweigthed $\ell_1$ minimization when noise $\left\{e_t\right\}$ is equal to zero. Results of a Monte-Carlo simulation of size $100$ with randomly generated linear ARX systems with orders $n_a=n_b=2$.}
\label{fig:L1reweigthed}
\end{figure}

\subsection{Numerical evaluation of sufficient bounds}\label{subsec:bounds}
This subsection presents a numerical evaluation of the estimates of number of outliers that can be corrected by the nonsmooth optimization-based estimator. Note that the numbers $k_i(X)$, $i=1,2$ from Theorem \ref{thm:X'(XX')} are hard to compute numerically because this would require a combinatorial optimization. To be more specific, the complexity of evaluating literally $k_i(X)$, $i=1,2$, is about $\sum_{k=1}^{k_i(X)}\binom{N}{k}C_i(N,n,k)$ where $\binom{N}{k}$  refers to the binomial coefficient,  $C_1(N,n,k) =\text{O}\big(n^3+kn(N+2n-k)\big)$ and $C_2(N,n,k) =\text{O}\big((n^2+nN)(N+n-k)\big)$ denote the complexity induced by the computation of $\|X_I^\top(X_IX_I^\top)^{-1}X_{I^c}\|_\infty$ and $\|X_{I^c}^\top(XX^\top)^{-1}X\|_1$ respectively with  $I$ and $I^c$ being some sets such that $\left|I\right|=k\leq N$ and $|I^c|=N-k$. \\ Therefore we just compare those bounds which are easier to compute. More specifically, four thresholds are compared:
\begin{itemize}
	\item The bounds $1/(2r(X))$ and $1/2+1/(2\xi(X))$ obtained in Theorem \ref{prop:sufficient-condition}.
	\item The mutual coherence-based bound $1/2\left(1+1/\mu(P_X)\right)$ with $P_X=I_N-X^\top\left(XX^\top\right)^{-1}X$ obtained in \cite{Bruckstein09,Bako11-Automatica}. Here $\mu$ represents the so-called mutual coherence. 
	\item  The bound $T(X)$ \cite{Sharon09-ACC} which  is used as a reference since it corresponds indeed to a direct computation of $N-\pi^o(X)+1$ (assuming the inequality in \eqref{eq:(ii)} is replaced with a strict one), see Eq.  \eqref{eq:Pi-O}. Recall that computing such a bound has a combinatorial complexity in the dimensions $(n,N)$ of the matrix $X$. Therefore, to make it feasible at a reasonable time on a standard computer, we have to set $n=2$ and $N\leq 200$.   
\end{itemize}
Figure \ref{fig:bounds-comparison-static} compares the sufficient thresholds in the case of static data drawn from a Gaussian distribution $\mathcal{N}(0,I_3)$. Figure \ref{fig:bounds-comparison-dynamic} compares the same thresholds for dynamic data in the form \eqref{eq:Regressor}. The generating system in this case is an ARX model defined by $y_t=-0.40y_{t-1}-0.15 u_{t-1}$ and driven by a normally distributed input sequence. In all cases, the data matrix $X$ is normalized so as to have unit $\left\|\cdot\right\|_{\Sigma_X}$-norm columns before being processed. Here, the norm $\left\|\cdot\right\|_{\Sigma_X}$ is   defined by $\left\|x\right\|_{\Sigma_X}=(x^\top \Sigma_X^{-1}x)^{1/2}$ with $\Sigma_X=XX^\top$. 
 The plots in Figure \ref{fig:bounds-comparison-static} and Figure \ref{fig:bounds-comparison-dynamic} draw the average values obtained over $100$ independent runs in term of percentage with respect to the total number of data.  
The results suggest three interesting facts :
\begin{itemize}
\item All the bounds are very loose that is, they largely underestimate the number of admissible gross errors. For example Figure \ref{fig:proba} shows that exact recovery can be achieved in the face a relatively large proportion (more than $70\%$) of corrupted data while the sufficient bounds in Figure \ref{fig:bounds-comparison-static} indicate a value around $20$. This is normal since the bounds reflect worst-case distributions of the outliers and their signs (see Theorem \ref{thm:equivalence} and Corollary \ref{cor:NS-Conds}).    
\item The bound based on $\xi(X)$ approaches the bound $T(X)$ \cite{Sharon09-ACC} while still enjoying  less numerical complexity. 
The other bounds based respectively on mutual-coherence and $r(X)$ are overall very close. These two last bounds seem to be more sensitive to the richness of the data and probably to their magnitudes also.  This fact is more apparent when the data are not normalized. 
\item As could be intuitively expected, the dynamic data generated by a linear system are less generic. The bounds obtained in this case are smaller. The question as to which type of dynamic system can generate more generic data is open. 
\end{itemize}

\begin{figure}[h]
\psfrag{Tsh}{\scriptsize bound in \cite{Sharon09-ACC}}
\psfrag{Tinf}{\scriptsize$\xi$-based}
\psfrag{Tmu}{\scriptsize$\mu$-based}
\psfrag{Tr}{\scriptsize$r$-based}
\includegraphics[width=9cm,height=5.5cm]{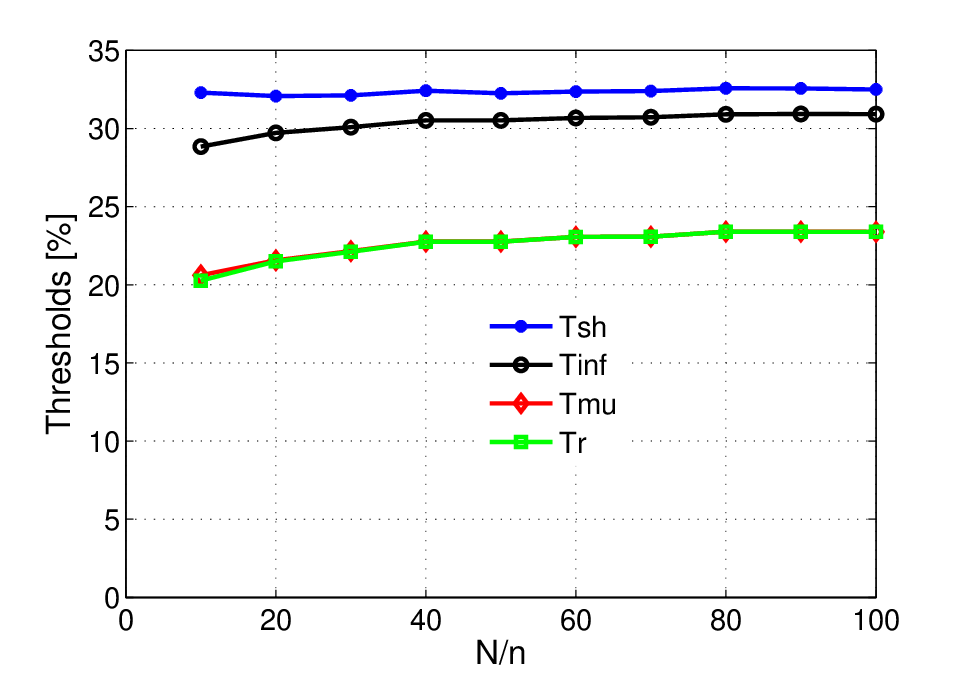}
\caption{Comparison of sufficient bounds on the number of gross errors for exact recovery : static data drawn from a Gaussian distribution.}
\label{fig:bounds-comparison-static}
\end{figure}
\begin{figure}[h]
\psfrag{Tsh}{\scriptsize bound in \cite{Sharon09-ACC}}
\psfrag{Tinf}{\scriptsize$\xi$-based}
\psfrag{Tmu}{\scriptsize$\mu$-based}
\psfrag{Tr}{\scriptsize$r$-based}
\includegraphics[width=9cm,height=5.5cm]{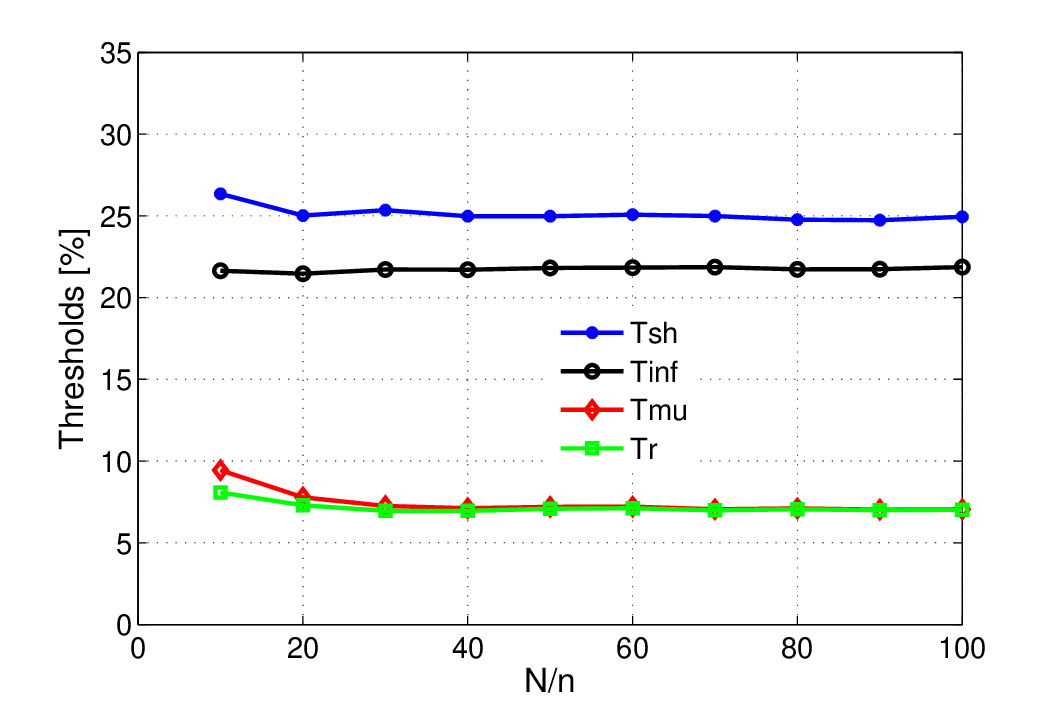}
\caption{Comparison of sufficient bounds on the number of gross errors for exact recovery : dynamic data generated by an ARX model.}
\label{fig:bounds-comparison-dynamic}
\end{figure}

\paragraph{Comparison of execution times}
Evaluating the $r$-based and  $\mu$-based bounds defined above is clearly very cheap as compared to the two other bounds. 
Therefore we shall only compare the execution times for the bounds $T(X)$ \cite{Sharon09-ACC} and $1/2(1+1/\xi(X))$ (see Theorem \ref{prop:sufficient-condition}) for $X\in \Re^{n\times N}$. This is done by measuring the average time over $10$ runs.\footnote{The computation is performed in a Matlab environment (version 2013a,64-bit), on a computer equipped with a processor Intel(R) Core(TM) i7-3630QM CPU@2.4Ghz, RAM 16Go. Only $10$ runs have been considered because the computation time for $T(X)$ grows very quickly beyond the capacity of the computer. Indeed $T(X)$ is computed only once when $n=5$ because the algorithm takes too long to complete (about 2 hours for each run in this case). Note that what matters in this experiment is not the numerical values of the execution times but the trend they exhibit.} The results reported in Table \ref{tab:complexity} show that for a given number $N$ of data points, the computation time  for the algorithm in \cite{Sharon09-ACC} is small for $n\leq 3$ but grows very fast (at a combinatorial rate) when $n$ increases.  In contrast, the cost associated with the evaluation of the $\xi$-based bound grows only at a logarithmic rate. This numerical experiment confirms that the proposed $\xi$-based bound is algorithmically cheaper to compute than $T(X)$. For example, for  $X\in \Re^{5\times 200}$ (see last column of Table \ref{tab:complexity}), computing $T(X)$ takes nearly 2 hours while the $\xi$-based bound derived in the current paper is obtained in less than $27$ seconds.  
\begin{table}[h]
	\centering
		\begin{tabular}{|c|c|c|c|c|c|}
		\hline 
			$n$        & 2 & 3 &4 & 5 \\ \hline
			$T(X)$  [sec.]& 0.02 & 1.40 & 101.82 & $5.62\times 10^3$\\ \hline
			 $\xi$-based [sec.] &22.70 & 24.56 & 25.44 & 26.80 \\  \hline 
		\end{tabular}
		\caption{Empirical comparison of the numerical complexities associated with evaluating $T(X)$ \cite{Sharon09-ACC} and the $\xi$-based bound (see Theorem \ref{prop:sufficient-condition}) in term of execution times. Here the number $N$ of data points is fixed and equal to $200$. }
		\label{tab:complexity}
\end{table}
\section{Conclusion}\label{sec:conclusion}
In this paper we have discussed the potential of nonsmooth convex optimization for addressing the problem of robust estimation. Considering in particular the problem of inferring an unknown parameter vector from measurements which are subject to  possibly large gross errors, we have shown that an exact recovery is possible regardless of the number of gross errors provided certain conditions of genericity hold. Then we investigated worst-case conditions which depend solely on the number of gross errors affecting the data. Necessary and sufficient conditions have been derived in this case. Since such conditions are numerically expensive to test directly, we have relaxed them into some sufficient but relatively tight conditions for  exact recovery. Simulations results reveal that the proposed worst-case  conditions for exact recovery are somewhat pessimistic when compared to the potential of the nonsmooth estimator in practice.  Concerning the identification problem, future work will consider the problem of designing the excitation of a dynamic system so as to achieve such strong genericity properties on the regressor matrix.

\begin{ack}
The authors thank the anonymous reviewers and the associate editor for insightful comments.
\end{ack}
\bibliographystyle{abbrv}

\begin{thebibliography}{10}

\bibitem{Bako11-Automatica}
L.~Bako.
\newblock Identification of switched linear systems via sparse optimization.
\newblock {\em Automatica}, 47:668--677, 2011.

\bibitem{Bako14-SPL}
L.~Bako.
\newblock Subspace clustering through parametric representation and sparse
  optimization.
\newblock {\em IEEE Signal Processing Letters}, 21:356--360, 2014.

\bibitem{Bako13-SLC}
L.~Bako and S.~Lecoeuche.
\newblock A sparse optimization approach to state observer design for switched
  linear systems.
\newblock {\em Systems \& Control Letters}, 62:143--151, 2013.

\bibitem{Boyd04-Book}
S.~Boyd and L.~Vandenberghe.
\newblock {\em Convex Optimization}.
\newblock Cambridge University Press, 2004.

\bibitem{Bruckstein09}
A.~M. Bruckstein, D.~L. Donoho, and M.~Elad.
\newblock From sparse solutions of systems of equations to sparse modeling of
  signals and images.
\newblock {\em SIAM Review}, 51:34--81, 2009.

\bibitem{Cadzow73-SIAM}
J.~A. Cadzow.
\newblock A finite algorithm for the minimum $l_\infty $ solution to a system
  of consistent linear equations.
\newblock {\em SIAM Journal on Numerical Analysis}, 10(4):607--617, 1973.

\bibitem{Candes06-IT}
E.~Candès and P.~A. Randall.
\newblock Highly robust error correction by convex programming.
\newblock {\em IEEE Transactions on Information Theory}, 54:2829--2840, 2006.

\bibitem{Candes05-IT}
E.~Candès and T.~Tao.
\newblock Decoding by linear programming.
\newblock {\em IEEE Transactions on Information Theory}, 51:4203--4215, 2005.

\bibitem{Candes08-JFAA}
E.~J. Candès, M.~Wakin, and S.~Boyd.
\newblock Enhancing sparsity by reweighted $\ell_1$ minimization.
\newblock {\em Journal Fourier Analysis and Applications}, 14:877--905, 2008.

\bibitem{Candes08-SPM}
E.~J. Candès and M.~B. Wakin.
\newblock An introduction to compressive sampling.
\newblock {\em IEEE Signal Processing Society}, 25:21--30, 2008.

\bibitem{Chen11-IFAC}
D.~Chen, L.~Bako, and S.~Lecoeuche.
\newblock A recursive sparse learning method: Application to jump markov linear
  systems.
\newblock In {\em 18th IFAC World Congress, Milano, Italy}, 2011.

\bibitem{Chen10-TechReport}
X.~Chen and W.~Zhou.
\newblock Convergence of reweighted $\ell_1$ minimization algorithms and unique
  solution of truncated $\ell_p$ minimization.
\newblock {\em Technical Report}, 2010.

\bibitem{Donoho:06}
D.~Donoho.
\newblock Compressed sensing.
\newblock {\em IEEE Transactions on Information Theory}, 52(4):1289--1306, Apr.
  2006.

\bibitem{Donoho03-NAS}
D.~L. Donoho and M.~Elad.
\newblock Optimal sparse representation in general (nonorthogonal) dictionaries
  via $\ell_1$ minimization.
\newblock {\em The Proceedings of National Academy of Science}, 100:2197--2202,
  2003.

\bibitem{Fawzi14-TAC}
H.~Fawzi, P.~Tabuada, and S.~Diggavi.
\newblock Secure estimation and control for cyber-physical systems under
  adversarial attacks.
\newblock {\em IEEE Transactions on Automatic Control}, 59:1454--1467, 2014.

\bibitem{Garulli12-SYSID}
A.~Garulli, S.~Paoletti, and A.~Vicino.
\newblock A survey on switched and piecewise affine system identification.
\newblock In {\em IFAC Symposium on System Identification, Brussels, Belgium},
  2012.

\bibitem{Gondzio12}
J.~Gondzio.
\newblock Interior point methods 25 years later.
\newblock {\em European Journal of Operational Research}, 218(3):587--601,
  2012.

\bibitem{Grant-Boyd_CVX}
M.~Grant and S.~Boyd.
\newblock {CVX}: Matlab software for disciplined convex programming, version
  1.2 (june 2009, build 711).

\bibitem{Hampel71}
F.~R. Hampel.
\newblock A general qualitative definition of robustness.
\newblock {\em The Annals of Mathematical Statistics}, 42:1887--1896, 1971.

\bibitem{Huber87-L1}
P.~J. Huber.
\newblock The place of $l_1$-norm in robust estimation.
\newblock {\em Computational Statistics and Data Analysis}, 5:255--262, 1987.

\bibitem{Huber-Book-09}
P.~J. Huber and E.~M. Ronchetti.
\newblock {\em Robust Statistics}.
\newblock A. John Wiley \& Sons, Inc. Publication (2nd Ed), 2009.

\bibitem{Le14-TAC}
V.~L. Le, F.~Lauer, and G.~Bloch.
\newblock Selective $\ell_1$ minimization for sparse recovery.
\newblock {\em IEEE Transactions on Automatic Control}, 59:3008--3013, 2014.

\bibitem{LjungBook}
L.~Ljung.
\newblock {\em System Identification: Theory for the user (2nd Ed.)}.
\newblock PTR Prentice Hall., Upper Saddle River, USA, 1999.

\bibitem{Maronna06-Book}
R.~A. Maronna, R.~D. Martin, and V.~J. Yohai.
\newblock {\em Robust Statistics: Theory and Methods}.
\newblock John Wiley \& Sons, Inc., 2006.

\bibitem{Maruta11-CDC}
I.~Maruta and T.~Sugie.
\newblock Identification of {PWA} models via data compression based on $\ell_1$
  optimization.
\newblock In {\em IEEE Conference on Decision and Control and European Control
  Conference, Orlando, FL, USA}, 2011.

\bibitem{Mitra13-SP}
K.~Mitra, A.~Veeraraghavan, and R.~Chellappa.
\newblock Analysis of sparse regularization based robust regression approaches.
\newblock {\em IEEE Transactions on Signal Processing}, 61:1249--1257, 2013.

\bibitem{Nesterov04-Book}
Y.~Nesterov.
\newblock {\em Introductory Lectures on Convex Optimization: A Basic Course}.
\newblock Springer, 2004.

\bibitem{Ohlsson13-Automatica}
H.~Ohlsson and L.~Ljung.
\newblock Identification of switched linear regression models using
  sum-of-norms regularization.
\newblock {\em Automatica}, 49:1045--1050, 2013.

\bibitem{Ohlsson2010a}
H.~Ohlsson, L.~Ljung, and S.~Boyd.
\newblock Segmentation of {ARX}-models using sum-of-norms regularization.
\newblock {\em Automatica}, 46:1107--1111, 2010.

\bibitem{Ozay11-CDC-ECC}
N.~Ozay and M.~Sznaier.
\newblock Hybrid system identification with faulty measurements and its
  application to activity analysis.
\newblock In {\em IEEE Conference on Decision and Control and European Control
  Conference, Orlando, FL, USA}, 2011.

\bibitem{Ozay12-TAC}
N.~Ozay, M.~Sznaier, C.~Lagoa, and O.~Camps.
\newblock A sparsification approach to set membership identification of a class
  of affine hybrid systems.
\newblock {\em IEEE Transactions on Automatic Control}, 57:634--648, 2012.

\bibitem{Paoletti07}
S.~Paoletti, A.~Juloski, G.~Ferrari-Trecate, and R.~Vidal.
\newblock Identification of hybrid systems: A tutorial.
\newblock {\em European Journal of Control}, 13:242--260, 2007.

\bibitem{Rockafellar97}
R.~T. Rockafellar.
\newblock {\em Convex Analysis}.
\newblock Princeton University Press, 1996.

\bibitem{Rousseeuw84}
P.~J. Rousseeuw.
\newblock Least median of squares regression.
\newblock {\em Journal of the American Statistical Association}, 79:871--880,
  1984.

\bibitem{Rousseeuw05-Book}
P.~J. Rousseeuw and A.~M. Leroy.
\newblock {\em Robust Regression and Outlier Detection}.
\newblock John Wiley \& Sons, Inc., 2005.

\bibitem{SeberBook03}
G.~A.~F. Seber and A.~J. Lee.
\newblock {\em Linear Regression Analysis}.
\newblock John Wiley \& Sons,John , Hoboken, New Jersey, 2nd Ed, 2003.

\bibitem{Sharon09-ACC}
Y.~Sharon, J.~Wright, and Y.~Ma.
\newblock Minimum sum of distances estimator: Robustness and stability.
\newblock In {\em American Control Conference, St Louis, MO, USA}, 2009.

\bibitem{Soderstrom-Book89}
T.~Soderstrom and P.~Stoica.
\newblock {\em System identification}.
\newblock Prentice Hall, Upper Saddle River, USA, 1989.

\bibitem{Tibshirani12}
R.~J. Tibshirani.
\newblock The lasso problem and uniqueness.
\newblock {\em Electronic Journal of Statistics}, 7:1456--1490, 2013.

\bibitem{Vidal10-SPM}
R.~Vidal.
\newblock A tutorial on subspace clustering.
\newblock {\em IEEE Signal Processing Magazine}, 28(2):52--68, 2010.

\bibitem{Vidal03}
R.~Vidal, S.~Soatto, Y.~Ma, and S.~Sastry.
\newblock An algebraic geometric approach to the identification of a class of
  linear hybrid systems.
\newblock In {\em Conference on Decision and Control, Maui, Hawaii, USA}, 2003.

\bibitem{Xu14-Automatica}
W.~Xu, E.-W. Bai, and M.~Cho.
\newblock System identification in the presence of outliers and random noises:
  A compressed sensing approach.
\newblock {\em Automatica}, 50:2905--2911, 2014.

\bibitem{Xu11-IT}
W.~Xu and B.~Hassibi.
\newblock Precise stability phase transitions for $ \ell _1$ minimization: A
  unified geometric framework.
\newblock {\em IEEE Transactions on Information Theory}, 57:6894--6919, 2011.

\bibitem{Zhao12}
Y.-B. Zhao and D.~Li.
\newblock Reweighted $\ell_1$-minimization for sparse solutions to
  underdetermined linear systems.
\newblock {\em SIAM Journal on Optimization}, 22:1065--1088, 2012.

\end{thebibliography}

\end{document}